\newtheorem{myprop}{Proposition}
\newtheorem{mytheorem}{Theorem}
\crefname{mytheorem}{theorem}{theorems}
\newtheorem{mylemma}{Lemma}
\newtheorem{mycor}{Corollary}
\newtheorem{myclaim}{Claim}
\theoremstyle{definition}
\newtheorem{mydef}{Definition}
\Crefname{mydef}{Definition}{Definitions} 
\theoremstyle{plain}
\newcommand{\bra}[1]{\left\langle#1\right|}
\newcommand{\ket}[1]{\left|#1\right\rangle}
\newcommand{\tr}{\operatorname{tr}}
\newcommand{\h}{\mathcal{H}}
\newcommand{\D}{\mathcal{D}}
\newcommand{\Sp}{\operatorname{S}}
\newcommand{\pperp}{\scriptscriptstyle{\perp}}
\newcommand{\Suv}[1]{\ket{s_{#1}\scriptstyle(u,v)}}
\newcommand{\Suvbra}[1]{\bra{s_{#1}\scriptstyle(u,v)}}
\newcommand{\Srs}[1]{\ket{s_{#1}\scriptstyle(\rho,\sigma)}}
\newcommand{\Srsbra}[1]{\bra{s_{#1}\scriptstyle(\rho,\sigma)}}
\newcommand{\Sxy}[1]{\ket{s_{#1}\scriptstyle(x,y)}}
\newcommand{\Sxybra}[1]{\bra{s_{#1}\scriptstyle(x,y)}}
\newcommand{\vv}{\operatorname{v}}
\title{Topologically driven no-superposing theorem with a tight error bound}
\author{Zuzana Gavorov\'a}
\affil{Rachel and Selim Benin
School of Computer Science and Engineering, The Hebrew University of Jerusalem,
91904 Jerusalem, Israel}
\affil{F\'isica Te\`orica: Informaci\'o i Fen\`omens Qu\`antics, Departament de F\'isica, Universitat Aut\`onoma de Barcelona, 08193 Bellaterra (Barcelona), Spain}
\date{} 
\begin{document}
\maketitle

\begin{abstract}
To better understand quantum computation we can search for its limits or no-gos, especially if analogous limits do not appear in classical computation. Classical computation easily implements and extensively employs the addition of two bit strings, so here we study ‘quantum addition’: the superposition of two quantum states. We prove the impossibility of superposing two unknown states, no matter how many samples of each state are available. The proof uses topology; a quantum algorithm of any sample complexity corresponds to a continuous function, but the function required by the superposition task cannot be continuous by topological arguments. Our result for the first time quantifies the approximation error and the sample complexity $N$ of the superposition task, and it is tight. We present a trivial algorithm with a large approximation error and $N=1$, and the matching impossibility of any smaller approximation error for any $N$. Consequently, our results limit state tomography as a useful subroutine for the superposition. State tomography is useful only in a model that tolerates randomness in the superposed state. The optimal protocol in this random model remains open.
\end{abstract}

\section{Introduction}

Quantum computing promises remarkable technological advances, some based on discoveries that were initially seen as limitations of quantum mechanics. The no-cloning theorem \cite{no_cloning} is an example of an uncovered limitation that turned into an advantage. It establishes the security of quantum cryptography, starting with quantum key distribution \cite{bennett1984proceedings,Scarani_2005}, and deepens our understanding of quantum mechanics all the way to its first principles; the first principles are often derived from, or checked against, the no-cloning theorem \cite{chiribella2010purification,coecke2011categorical,kent2012minkowski,zurek2009darwinism,vitanyi2001kolmogorov}. Cloning distinguishes between classical and quantum mechanics, being easy in one and impossible in the other. The same is true, for example, for programming \cite{nielsen1997programmable}, the universal-NOT gate \cite{buzek1999unot} and deleting \cite{no_deleting}.

First, we would like to motivate another such task: the superposition. Given some predetermined weights $\alpha,\beta\in\mathbb{R}_+$; the task is to output a state proportional to
\begin{equation}
    \alpha\ket{u}+\beta\ket{v}\label{eq:superpos_simple}
\end{equation}
from any two input states $\ket{u},\ket{v}\in\hat{\h}$, where $\h$ is a finite-dimensional Hilbert space and $\hat{\h}\subset\h$ denotes the subset of unit vectors\footnote{The precise definition will specify the relative phase between the $\alpha$ and $\beta$ terms, so no generality is lost by assuming $\alpha,\beta\in\mathbb{R}_+$.}. The superposition can be seen as a quantum version of addition. Graphically in \eqref{eq:superpos_simple}, the superposition corresponds to inserting $+$ between the two rescaled vectors. How hard is it to implement this $+$? This fundamental question is interesting in its own right.

Moreover, if easily implementable, $+$ could serve as an elementary gate for quantum computation. Consider Aaronson’s \emph{quantum state tree} model \cite{aaronson2004multilinear,Cai_2015} whose elementary operations are $+$ and the trivially implementable tensor product. In ref. \cite{aaronson2004multilinear} the purpose of the quantum state tree model was to capture the hardness of a state’s classical description, but suppose for a moment that $+$ was easily implementable. Then, the quantum state tree model could directly instruct a state's preparation process. This way to classically describe a state would also correspond to a new approach to physically preparing it.

The superposition question was studied previously. Algorithms were found for cases when the input states are restricted or partially known \cite{aaronson2004multilinear, doosti2017universal, oszmaniec2016creating,alvarez2015forbidden,li2017approximate,hu2016experimental,li2017experimentally}. For example, algorithms exist for inputs restricted to orthogonal pairs 
if these inputs are qubits \cite{doosti2017universal} or have known preparation circuits \cite{aaronson2004multilinear}\footnote{See the proof of Theorem 6 (OTree $\subseteq \Psi$P) in \cite{aaronson2004multilinear}.}. Another restriction of the inputs that makes an algorithm possible fixes their overlap with a reference state \cite{oszmaniec2016creating}. When the input states are unrestricted and completely unknown, Oszmaniec, Grudka, Horodecki, and W{\'o}jcik \cite{oszmaniec2016creating} showed that a quantum circuit cannot create an exact superposition from one copy of each input.

In this work, we define and study the superposition question when the input states $\ket{u}, \ket{v}\in\hat{\h}$ are completely unrestricted and unknown, $N$ copies of each input are available, and approximations are allowed. We find that in the most commonly considered models of quantum computation preparing such a superposition is impossible for any $N$; some inputs necessarily produce a large approximation error in the output. This lower bound on the error is tight. Fortunately, complicating the computational model overcomes the impossibility, as it allows employing state tomography in a useful way. Therefore, our strengthened superposition impossibility has new consequences. In addition to widening the operational distinction between the quantum and classical theories that follows already from the impossibility of ref. \cite{oszmaniec2016creating}, our results yield a new separation between different \emph{quantum} computational models, and a better understanding of the topological method developed in \cite{gavorova2024topological} and employed here. 

\subsection{Precise problem statement}

Correctly formulating the problem for completely unknown inputs has a subtlety. Oszmaniec et al. \cite{oszmaniec2016creating} observed that for unknown $\ket{u},\ket{v}\in\hat{\h}$, preparing \eqref{eq:superpos_simple} is trivially impossible. Suppose an equal-weight superposition protocol succeeds for $\ket{u}=\ket{0}$ and $\ket{v}=\ket{1}$, outputting $\ket{0}+\ket{1}$. If we change $\ket{1}$ to $-\ket{1}$ should we expect $\ket{0}-\ket{1}$? Distinguishing the global phase of a vector is unphysical, and thus trivially impossible for any protocol. Taking this into account, our superposition definition should again accept $\ket{0}+\ket{1}$ as the correct answer. It should accept outputs that ignore the global phases of $\ket{u},\ket{v}$, and instead depend only on the physically relevant representation of the inputs -- as density matrices. Instead of the unnormalized state in \eqref{eq:superpos_simple}, we can consider, for example,
\begin{equation}
    \Suv{+}=\alpha \ket{u} \left|\left\langle v|u \right\rangle\right|
    +
    \beta
    \ket{v}\!\left\langle v| u\right\rangle\text{,}\label{eq:sup_example_+}
\end{equation}
which ensures that $\Suv{+}\!\Suvbra{+}$ is a function of the density matrices $\ket{u}\!\bra{u}$, $\ket{v}\!\bra{v}$. The renormalization of $\ket{s_+{\scriptstyle(u,v)}}$ is well defined on all $\ket{u}$, $\ket{v}$ nonorthogonal, which is a dense subset of $\hat{\h}\times\hat{\h}$. We generalize \eqref{eq:sup_example_+} for our final definition.

\begin{samepage}
\begin{mydef}\label{def_sup}
A superposition of $\ket{u},\ket{v}\in\hat{\h}$ is a state proportional to
\begin{equation}
    \Suv{c}=
    \alpha \ket{u} \left|c_{uv}\right|
    +
    \beta
    \ket{v} c_{uv}\text{,}
\end{equation}
for some complex function $c_{uv}=c(\ket{u},\ket{v})$ such that $S:=c^{-1}(\mathbb{C}\setminus\{0\})$ is a dense subset of $\hat{\h}\times\hat{\h}$.
\end{mydef}
\end{samepage}

Compared to Oszmaniec et al. \cite{oszmaniec2016creating}\footnote{The definition of Oszmaniec et al. is equivalent to the following: $\ket{s}$ is a superposition of $\ket{u}$ and $\ket{v}$ with coefficients $\alpha$, $\beta$ iff $\ket{s}\!\bra{s}= N(\alpha\ket{u}+e^{i\phi}\beta\ket{v})(\alpha^*\bra{u}+e^{-i\phi}\beta^*\bra{v})$ for some real $N, \phi$. Restricting $\phi$ to be a deterministic function of $\ket{u}$ and $\ket{v}$ falls within our definition with $e^{i\phi}=c_{uv}/|c_{uv}|$. This $\phi$-function can be undefined on $\ket{u}, \ket{v}$ outside the dense subset.}, this article formalizes the superposition differently: It defines the superposition of $\ket{u}$ and $\ket{v}$ as a \emph{deterministic function} of $\ket{u}$ and $\ket{v}$, but it also considers \emph{sets} of superpositions (each superposition coming from a different $c$-function). This allows us to postpone the (non)determinism nuance to our later discussion of computational models.  

Definition \ref{def_sup} includes \eqref{eq:sup_example_+} and other choices of the form $c_{uv}=\bra{v}M_{uv}\ket{u}$. If the matrix $M_{uv}$ depends only on the inputs' density matrices, then so does the superposition. For $c_{uv}=\left\langle v|j\right\rangle\! \left\langle i|u\right\rangle $ with $i,j\in\{0,1,\dots, \dim\h-1\}$, the resulting superposition is (on the corresponding dense subset) proportional to
\begin{equation}
    \alpha \ket{u}\frac{\left\langle u|i\right\rangle}{\left|\left\langle u|i\right\rangle\right|}
    +
    \beta \ket{v}
    \frac{\left\langle v|j\right\rangle}{\left|\left\langle v|j\right\rangle\right|}
    =
    \alpha \vv_i(\ket{u}\!\bra{u})
    +
    \beta \vv_j(\ket{v}\!\bra{v})\text{.}\label{eq:superpos_example2}
\end{equation}
In this particular choice of superposition we can treat each input separately; each input has its own function $\ket{u}\!\bra{u}$ has $\vv_i$, $\ket{v}\!\bra{v}$ has $\vv_j$) that maps the rank-$1$ density matrix to a corresponding\footnote{A vector $\ket{w}\in\hat{\h}$ corresponds to $\rho\in\D_\text{pure}(\h)$ iff $\ket{w}\!\bra{w}=\rho$.} vector. The function $\vv_i$ chooses the vector by fixing its $i$-th entry, $\bra{i}\vv_i(\rho)$, to be real positive.\\

In this work we study superposition protocols, considering their error and sample complexity. The error is quantified by the trace-norm distance from the target output, the superposition. We are interested in the worst-case performance of the protocols; the greatest error obtained for any valid input $(\ket{u},\ket{v})\in S$, i.e., any input such that the target output is renormalizable. Demanding worst-case guarantees is common in complexity theory and the reason is practical. An unwarranted error on some inputs is undesirable, especially if the solution is to be used as an elementary operation, as is the case for $+$. Worst-case error bounds are satisfied, for example, by state tomography protocols \cite{sugiyama2013precision,christandl2012reliable}, which also set the expectation for our second performance measure: sample complexity, or the number of input copies used by the protocol. The following superposition protocol seems possible with $N=\exp(n)$ samples of each $n$-qubit input ($n=\lceil\log_2\dim\h\rceil$): Perform state tomography to get an approximate classical description of each of the two input states, then classically describe their superposition, then build a new circuit that transforms the all-zero input into this described state\footnote{\label{footnote4}This is always possible by the universality of quantum circuits.}. We might, therefore, expect the optimal protocol’s sample complexity to exceed $1$, due to the previous no-go result \cite{oszmaniec2016creating}, and not to exceed $\exp(n)$, due to state tomography. Our results below defy this expectation, revealing the importance of distinguishing between different computational models. 

\subsection{Overview of the results in different computational models}
The answer to our superposition question depends on which model of computation we consider: the trace-preserving model, the postselection (probabilistic) model, or the random model. For each we give a definition, an example protocol, and our result.

\emph{Trace-preserving model.} The trace-preserving model with sample complexity $N$ is a completely positive trace-preserving map (CPTP) applied to $\rho_\text{in}^{\otimes N}$, for example, to $(\ket{u}\!\bra{u}\otimes\ket{v}\!\bra{v})^{\otimes N}$. The output approximates a given target output, for example, a given superposition $\alpha \ket{u} \left|c_{uv}\right|
    +
    \beta
    \ket{v} c_{uv}$ determined by a particular $c$-function.

\begin{figure}
\begin{minipage}[c]{0.4\linewidth}
    \centering
    \subfloat[]{
         \hspace{0pt}\includegraphics[scale=0.8,valign=b]{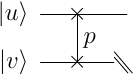}
         \label{fig:p-swap}
     }
     \hspace{-10pt}
     \subfloat[]{
         \includegraphics[scale=0.8,valign=b]{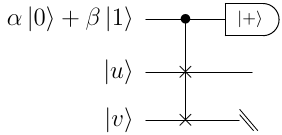}
         \label{fig:c-swap}
     }
        \caption{(a) The probabilistic swap protocol swaps the two registers with probability $p=\sin^2\theta$ and does not swap them with probability $1-p=\cos^2\theta$. Then, the second register is traced out. (b) The controlled swap protocol has an additional control qubit to control the controlled swap and a control measurement to postselect the $\ket{+}=(\ket{0}+\ket{1})/\sqrt{2}$ outcome.}
        \label{fig:two graphs}
\end{minipage}
\hfill
\begin{minipage}[c]{0.58\linewidth}
    \centering
        \includegraphics[scale=0.8,valign=c]{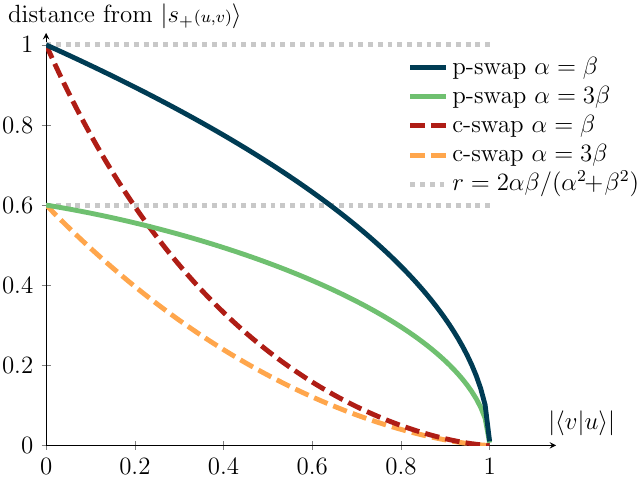}\hspace{-10pt}
        \caption{The error as a function of the inputs for different protocols and $\alpha :\beta$ ratios. The controlled swap protocol (c-swap) outperforms the probabilistic swap protocol (p-swap) on almost all inputs, but the worst-case error of both is $r$.}\label{fig:protocols_plot}
\end{minipage}
\end{figure}

Next, we give an example protocol in the trace-preserving model for the target superposition $\Suv{+}$. The protocol $\Psi$ accepts $N=1$ samples of each state and outputs the mixed state $\cos^2\theta\ket{u}\!\bra{u}+\sin^2\theta\ket{v}\!\bra{v}$, where $\cos\theta=\alpha/\sqrt{\alpha^2+\beta^2}$. A probabilistic swap (\Cref{fig:p-swap}) implements this $\Psi$. For $\ket{u},\ket{v}\in S$ (nonorthogonal pairs from $\hat{\h}$) it satisfies
\begin{eqnarray}
    &\left|\left|\Psi(\ket{u}\!\bra{u}\otimes\ket{v}\!\bra{v})-\frac{\Suv{+}\!\Suvbra{+}}{\left\langle s_{\operatorname{+}}{\scriptstyle(u,v)}|s_{\operatorname{+}}{\scriptstyle(u,v)}\right\rangle}\right|\right|_{\tr} =r\sqrt{\frac{1-r|\left\langle v | u\right\rangle|}{1+r|\left\langle v | u\right\rangle|}}\sqrt{1-|\left\langle v | u\right\rangle|^2}\text{,}&\\
    &\text{where }r=\sin(2\theta)=\frac{2\alpha\beta}{\alpha^2+\beta^2}\text{,}&
\end{eqnarray}
and $\left|\left|O\right|\right|_{\tr}:=\tr\sqrt{OO^\dagger}$ so the maximal trace distance of two density states is $2$. Above, the output state is the furthest from the desired state as $\left\langle v | u\right\rangle\to 0$, when the trace distance equals $r$ (see \Cref{fig:protocols_plot}). We say the worst-case error of this protocol is $r$. Our first result discusses the worst-case performance of \emph{any} protocol in the trace-preserving model.

\emph{Result 1.}\label{res1} For any choice of target superposition and any sample complexity $N$, no trace-preserving protocol achieves a worst-case error smaller than $r$. 
In light of the above protocol, this impossibility is tight.

\emph{Postselection model.} The postselection model with sample complexity $N$ corresponds to a completely positive trace-nonvanishing map (CPTNV) applied to $\rho_\text{in}^{\otimes N}$. Only trace-nonincreasing maps are physical. Trace-nonvanishing means that for any input state, the map outputs a nonzero\footnote{The nonzero requirement, also appearing in the definition of the complexity class PostBQP \protect{\cite{aaronson_PostBQP}}, allows the model to remain worst-case.} positive semidefinite matrix so that it can be renormalized and then compared to a given target output. The postselection model generalizes the trace-preserving model.

Our example protocol $\Phi$ is again for the $\Suv{+}$ superposition and $N=1$. It initializes a control qubit to $\cos\theta\ket{0}+\sin\theta\ket{1}$, then controlled-swaps $\ket{u}$ and $\ket{v}$, then projects the control onto the $\ket{+}$ state (\Cref{fig:c-swap}). For nonorthogonal $\ket{u},\ket{v}$ this CPTNV $\Phi$ has the error
\begin{align}
    \left|\left|\frac{\Phi(\ket{u}\!\bra{u}\otimes\ket{v}\!\bra{v})}{\tr \left[\Phi(\ket{u}\!\bra{u}\otimes\ket{v}\!\bra{v})\right]}-\frac{\Suv{+}\!\Suvbra{+}}{\left\langle s_{\operatorname{+}}{\scriptstyle(u,v)}|s_{\operatorname{+}}{\scriptstyle(u,v)}\right\rangle}\right|\right|_{\tr} & =r\sqrt{\frac{1-r|\left\langle v | u\right\rangle|}{1+r|\left\langle v | u\right\rangle|}}\sqrt{1-|\left\langle v | u\right\rangle|^2}\frac{1-|\left\langle v | u\right\rangle|}{1+r|\left\langle v | u\right\rangle|^2}\text{,}
\end{align}
an improvement for $|\left\langle v | u\right\rangle|\in(0,1)$, but not in the worst case, when the error again equals $r$ (\Cref{fig:protocols_plot}). Indeed, our matching lower bound extends to the postselection model.

\emph{Result 1*.}\label{res11} The worst-case trace distance $r$ is optimal for any $N$ and any choice of superposition also in the postselection model.

\emph{Random model.} The random model with sample complexity $N$ corresponds to a quantum instrument $\{\Psi_i\}_{i\in I}$ (an indexed set of CP maps that sum up to a CPTP) applied to $\rho_{\text{in}}^{\otimes N}$. We obtain the label $i$ with probability $\tr\left[\Psi_i(\rho_{\text{in}}^{\otimes N})\right]$, in which case the algorithm’s quantum output $\Psi_i(\rho_{\text{in}}^{\otimes N})$ approximates (up to the renormalizations) the $i$-th target output, for example, the $i$-th superposition $\alpha\ket{u}|c_{i\,uv}|+\beta\ket{v}c_{i\,uv}$. Thus, in this model we request an indexed \emph{set} of $|I|$ targets. Which target’s approximation is the output of a particular run of the protocol is random, but revealed by the classical outcome $i$. This model includes the trace-preserving model (obtained by setting $|I|=1$).

An example protocol in the random model is the protocol for qubit inputs $\ket{u},\ket{v}\in\mathbb{C}^2$ of Doosti, Kianvash and Karimipour \cite{doosti2017universal}, which has $N=1$ and zero error if the inputs are restricted to orthogonal states, $\left\langle v | u \right\rangle = 0$. An example protocol with unrestricted inputs is the state tomography subroutine suggested above as a building block of a possible superposition protocol. The na\"ive suggestion is to run a state tomography for both $\ket{u}$ and $\ket{v}$, to use the resulting classical estimates to calculate a classical description of a state close to 
\begin{equation}
    \left(\alpha \ket{u}+\beta\ket{v}\right)\left(\alpha \bra{u}+\beta\bra{v}\right)\label{expr:superpos_state}\text{,}
\end{equation}
and then to build a new quantum circuit that outputs a state that is arbitrarily close to that classical description (see \cref{footnote4}). To estimate \eqref{expr:superpos_state}, classical estimates of the \emph{density matrices} $\ket{u}\!\bra{u}$, $\ket{v}\!\bra{v}$ do not suffice, because \cref{expr:superpos_state} contains cross-terms. Thus, we ask for \emph{vector tomography}, a state tomography that outputs a classical estimate of a vector $\vv(\ket{u}\!\bra{u})\in\hat{\h}$.

\emph{Result 2.}\label{res2} Vector tomography is not possible in the trace-preserving and postselection models.

\emph{Result 3.}\label{res3} Vector tomography is possible in the random model.

In the above definitions, randomness enters protocols in two crucially distinct ways. A postselection protocol randomly (sometimes) generates an output, but the output itself is deterministic. A random protocol deterministically (always) generates an output, but the output itself is random.
The results reveal that this distinction is important;
the random model is strictly stronger than the postselection model. As a corollary of Result \hyperref[res3]{3}, vector tomography gives a superposition protocol in the random model, with sample complexity exponential in $n=\lceil\log_2\dim\h\rceil$. The optimality of this protocol remains open.

\subsection{Topology as a tool to prove no-go results}\label{sec_method}

Results \hyperref[res1]{1}, \hyperref[res11]{1*}, and \hyperref[res2]{2} are all no-go results. They use the method developed in \cite{gavorova2024topological}, which combines two observations:
\begin{enumerate}
    \item any quantum circuit corresponds to a continuous function,
    \item the task requires implementing a function that cannot be continuous by topological arguments.
\end{enumerate} 
In ref. \cite{gavorova2024topological}, the function’s input is a unitary oracle. Here, it is a pair of states. Specifically, the first observation corresponds to:

\begin{myprop}\label{prop_cont} If a postselection algorithm takes as the input any number of copies of $\ket{u}\!\bra{u}$ and $\ket{v}\!\bra{v}$, then its output is a continuous function of $\ket{u}\!\bra{u}, \ket{v}\!\bra{v}$.
\end{myprop}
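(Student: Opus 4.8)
The plan is to write the postselected output explicitly as a ratio of two continuous functions and then argue that the denominator never vanishes. Fix the circuit, so that the copy numbers $n_u,n_v$, the number of ancillas $a$, the unitary $U$, and the success projector $\Pi_\text{succ}$ are all fixed. On the input $\rho_\text{in}(\rho_u,\rho_v)=\rho_u^{\otimes n_u}\otimes\rho_v^{\otimes n_v}\otimes(\ket{0}\!\!\bra{0})^{\otimes a}$, where $\rho_u=\ket{u}\!\!\bra{u}$ and $\rho_v=\ket{v}\!\!\bra{v}$, the postselected output on the designated output register is
\[
\sigma(\rho_u,\rho_v)=\frac{\mathrm{Tr}_\text{env}\!\left[\Pi_\text{succ}\,U\,\rho_\text{in}(\rho_u,\rho_v)\,U^\dagger\,\Pi_\text{succ}\right]}{\mathrm{Tr}\!\left[\Pi_\text{succ}\,U\,\rho_\text{in}(\rho_u,\rho_v)\,U^\dagger\right]}\text{,}
\]
where $\mathrm{Tr}_\text{env}$ traces out every register except the output one. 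I would denote the numerator by $N(\rho_u,\rho_v)$ and the denominator (the success probability) by $p(\rho_u,\rho_v)$, noting $p=\mathrm{Tr}(N)$.

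Next I would observe that $N$ is continuous. For a fixed integer $n$, the tensor-power map $\rho\mapsto\rho^{\otimes n}$ is a polynomial, hence continuous, map in the entries of $\rho$; conjugation by the fixed $U$, left- and right-multiplication by the fixed $\Pi_\text{succ}$, and the partial trace $\mathrm{Tr}_\text{env}$ are all fixed linear maps, hence continuous. Thus $N$ is a composition of continuous maps in $(\rho_u,\rho_v)$, and $p=\mathrm{Tr}(N)$ is likewise continuous.

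Then the crux is the division. By the definition of a postselection circuit, $p(\rho_u,\rho_v)>0$ for every input. Since $\mathcal{D}_\text{pure}(\mathcal{H})$ is the continuous image of the unit sphere of $\mathcal{H}$ under $\ket{u}\mapsto\ket{u}\!\!\bra{u}$, it is compact, and so is $\mathcal{D}_\text{pure}(\mathcal{H})\times\mathcal{D}_\text{pure}(\mathcal{H})$. A continuous, strictly positive function on a compact set attains a positive minimum, so $p\ge\varepsilon>0$ uniformly. The quotient of a continuous function by a continuous function bounded away from zero is continuous, whence $\sigma=N/p$ is continuous in $(\rho_u,\rho_v)$.

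The main obstacle is precisely the well-definedness of this quotient: everything else is bookkeeping with fixed linear maps and polynomial tensor powers. It is the ``success probability never zero'' clause in the definition of a postselection circuit -- strengthened to a uniform positive lower bound via compactness -- that rules out the division-by-zero discontinuities a general circuit (one allowed to fail with certainty on some inputs) could exhibit. I would flag that this is exactly where the worst-case, nonzero-output restriction of the model does its work, since dropping it would break continuity at inputs with vanishing success probability.
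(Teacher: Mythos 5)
Your proposal is correct, and its core is identical to the paper's proof: the paper likewise writes the circuit as a fixed operator $\Pi_\text{succ}V$ acting on $\ket{u}\!\!\bra{u}^{\otimes N}\otimes\ket{v}\!\!\bra{v}^{\otimes M}\otimes\ket{\boldsymbol{0}}\!\!\bra{\boldsymbol{0}}$ followed by a partial trace, and concludes continuity from that explicit formula (\cref{eq:prop_superpos}) -- exactly your ``bookkeeping'' paragraph. The one real difference is where the normalisation lives. The paper reads ``output'' as the \emph{unnormalised} conditional state $\mathcal{A}$, so its proof of \Cref{prop_cont} stops before any division; the quotient $\mathcal{A}/\operatorname{tr}[\mathcal{A}]$ only appears later, in the statement of \Cref{thm_superpos}, where well-definedness is encoded by requiring the range $\mathcal{D}_+(\mathcal{H})$ (strictly positive operators, hence $\operatorname{tr}[\mathcal{A}]>0$), and the continuity of the quotient is used implicitly when $g$ is defined in \cref{eq_g}. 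You front-load that division into the proposition itself, which proves a slightly stronger statement and makes explicit the role of the never-zero success probability -- a point the paper only discusses informally when motivating the worst-case model. Your diagnosis of where the model restriction ``does its work'' is therefore accurate. One small simplification: the compactness detour is unnecessary for continuity, since a quotient of continuous functions is continuous at every point where the denominator is nonzero, and pointwise positivity of $p$ is already guaranteed; a uniform lower bound $p\geq\varepsilon>0$ would only be needed for uniform continuity or quantitative error bounds, neither of which the proposition asserts.
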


\noindent This is a weaker version of the well-known observation that any quantum circuit corresponds to a polynomial \cite{lbounds_polynomials}.
A lower bound on the polynomial degree of the required function implies a lower bound on the sample complexity of any implementing algorithm. Continuity is more general. Disproving the continuity of the required function excludes algorithms of \emph{any} sample complexity. Can we relate our problem to some continuous functions forbidden by topology? Comparing to \cite{gavorova2024topological} we suggest two functions.

The first function is a \emph{section}. Topology forbids a continuous function, also called \emph{section}, that maps a pure density operator to a corresponding vector. In other words, no continuous $\vv:\D_\text{pure}(\h)\to\hat{\h}$ satisfies $\vv(\rho)\vv(\rho)^\dagger=\rho$, i.e., right-inverts the outer product map\footnote{The outer product map $\pi:\hat{\h}\to\D_\text{pure}(\h)$, $\ket{u}\mapsto\ket{u}\!\bra{u}$ is the familiar Hopf fibration. When the input $\ket{u}=(u_0, u_1)^T\in\hat{\mathbb{C}^2}$ is a qubit, the Hopf fibration $\pi(\ket{u})=\ket{u}\!\bra{u}=(I+r\cdot\vec{\sigma})/2$ with $r=(\operatorname{Re}[2u_0^*u_1], \operatorname{Im}[2u_0^*u_1], |u_0|^2-|u_1|^2)^T$ lets us represent the qubit on the Bloch sphere, because $\D_\text{pure}(\mathbb{C}^2)$ is homeomorphic to the $2$-sphere ($r\in\Sp^2$).}. This is analogous to no-section for unitaries \cite{gavorova2024topological}, where no continuous function maps a unitary superoperator $\mathcal{U}(\cdot)=U\cdot U^\dagger$ to a “canonical” unitary $\mathcal{U}\mapsto \widetilde{U}$, such that $\widetilde{U}\cdot\widetilde{U}^\dagger=\mathcal{U}$. For states, the nonexistence of a section yields a no-go for the exact superposition when $\dim\h\geq 3$ (see \Cref{sec_alt_proof} for the proofs). To deal with approximations, we need to “quantify” the discontinuity in any $\vv$. Intuitively, the desired maps $\vv: \ket{u}\!\bra{u}\mapsto\ket{\widetilde{u}}$ and $U\cdot U^\dagger\mapsto \widetilde{U}$ have something in common; as opposed to the domains, the ranges distinguish the global phase $\lambda\in \operatorname{U}(1)\approx\Sp^1$, where we note the homeomorphism between $\operatorname{U}(1)$ and the circle $\Sp^1$ (implicit from now on). The next function extracts this $\Sp^1$ structure to contradict continuity, exploiting the nontrivial topology of the circle $\Sp^1$: different numbers of loops on $\Sp^1$ cannot be continuously contracted to each other.

The second function is a \emph{continuous homogeneous function into $\Sp^1$}. For $m\in\mathbb{Z}$, a function $f$ is $m$-homogeneous, if for all scalars $\lambda$ and inputs $x$, it satisfies $f(\lambda x)=\lambda^m f(x)$. If the domain of $f$ is $\hat{\h}$, we only consider $|\lambda|=1$. For example, on $\hat{\h}$ the outer product map $\ket{u}\mapsto\ket{u}\!\bra{u}$ is $0$-homogeneous. We find excluding \emph{continuous homogeneous functions into $\Sp^1$} more useful for quantifying approximations. Specifically, denoting by $\hat{\mathbb{C}^2}$ the space of norm-$1$ vectors in $\mathbb{C}^2$, our no-go results rely on the following lemma:
    
\begin{mylemma}\label{lem_2homog} Let $m\neq 0$. There is no continuous $m$-homogeneous function $\hat{\mathbb{C}^2}\to \Sp^1$.
\end{mylemma}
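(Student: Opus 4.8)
The plan is to contradict continuity by exhibiting a contractible loop in $S^3$ whose image winds nontrivially around $S^1$. First I would fix the identifications $S^3\subseteq\mathbb{C}^2$ and $S^1\subseteq\mathbb{C}$, so that the scalars $\lambda$ appearing in the homogeneity condition are exactly the unit complex numbers $\lambda\in S^1$: these are precisely the scalars that preserve both $S^3$ and $S^1$, and indeed $\lambda x\in S^3$ already forces $|\lambda|=1$. With this convention the $2$-homogeneity of $f$ reads $f(\lambda x)=\lambda^2 f(x)$ for all $x\in S^3$ and $\lambda\in S^1$.

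Next I would single out the orbit of the phase action through an arbitrary fixed point $x_0\in S^3$, namely the loop $\gamma\colon S^1\to S^3$, $\gamma(\lambda)=\lambda x_0$. This is a genuine (injective) loop, and since $S^3$ is simply connected, $\pi_1(S^3)=0$, so $\gamma$ is null-homotopic. By functoriality of $\pi_1$, the composite $f\circ\gamma$ is then null-homotopic in $S^1$, i.e. it has winding number $0$.

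The contradiction comes from computing that winding number directly from homogeneity. We have $(f\circ\gamma)(\lambda)=f(\lambda x_0)=\lambda^2 f(x_0)$, which is the squaring map $\lambda\mapsto\lambda^2$ (degree $2$) post-composed with multiplication by the fixed unit scalar $f(x_0)$, a rotation of $S^1$ and hence homotopic to the identity (degree $1$). Thus $f\circ\gamma$ has winding number $2\neq 0$, contradicting the previous paragraph. Phrased through fundamental groups: $f_*[\gamma]=0$ because $[\gamma]=0\in\pi_1(S^3)$, yet the degree computation gives $f_*[\gamma]=2\in\mathbb{Z}\cong\pi_1(S^1)$, so $2=0$, a contradiction.

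The argument needs no topology beyond $\pi_1(S^3)=0$ and $\pi_1(S^1)\cong\mathbb{Z}$, so the only place demanding care is the bookkeeping of the homogeneity exponent against the winding number. I expect the main (and minor) obstacle to be stating cleanly that the relevant scalar action is by $S^1$ and verifying that $\lambda\mapsto\lambda^2 f(x_0)$ has degree exactly $2$; since we only need this degree to be nonzero, any orientation-convention sign ambiguity is harmless.
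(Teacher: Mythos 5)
Your proof is correct and takes essentially the same route as the paper: both arguments compare the phase-orbit loop $t\mapsto e^{2\pi i t}\ket{x_0}$ against the constant loop, use simple-connectedness of $S^3$ to force null-homotopy, and use $2$-homogeneity to show the image loop has winding number $2$ in $\pi_1(S^1)\cong\mathbb{Z}$, a contradiction. Your version merely packages the same idea slightly more formally (via functoriality of $\pi_1$ and an explicit degree computation), which is fine.
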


Next, in \Cref{sec:superpos_nogo} we prove  Results \hyperref[res1]{1} and \hyperref[res11]{1*} using \Cref{lem_2homog}, then we prove \Cref{lem_2homog}. In \Cref{sec:tomography} we prove Results \hyperref[res2]{2} and \hyperref[res3]{3}. In \Cref{sec:discussion} we discuss the significance of our results for setting quantum mechanics apart from classical, for harnessing the power of random outputs, and for understanding the applicability of the topological method.

\section{The superposition impossibility}\label{sec:superpos_nogo}

Results \hyperref[res1]{1} and \hyperref[res11]{1*} are the superposition no-gos in the trace-preserving and the postselection models. Since the trace-preserving model is a special case of the postselection model, we focus on the latter. Denote by $\D_+(\h)$ the set of positive semidefinite and nonzero linear operators from $\h$ to itself. By definition, the outputs of a postselection algorithm are in $\D_+(\h)$.

\begin{mytheorem}\label{thm_superpos}
 Let $\alpha,\beta\in\mathbb{R}_+$. Let any function $c:\hat{\h}\times\hat{\h}\to\mathbb{C}$, such that $S:=c^{-1}(\mathbb{C}\setminus\{0\})$ is dense in $\hat{\h}\times\hat{\h}$, determine the superposition $\Suv{c}=\alpha \left|c(u,v)\right|\ket{u}+\beta\, c(u,v)\ket{v}$. There is no continuous function $\mathcal{A}:\mathcal{D}_\text{pure}(\h)\times \mathcal{D}_\text{pure}(\h)\to \mathcal{D}_+(\h)$ such that 
\begin{align*}
\sup_{(\ket{u},\ket{v})\in S}\left|\left|\frac{\mathcal{A}(\ket{u}\!\bra{u},\ket{v}\!\bra{v})}{\operatorname{tr}\left[\mathcal{A}(\ket{u}\!\bra{u},\ket{v}\!\bra{v})\right]} - 
    \frac{\Suv{c}\!\Suvbra{c}}{\left\langle s_{\operatorname{c}}{\scriptstyle(u,v)}|s_{\operatorname{c}}{\scriptstyle(u,v)}\right\rangle}\right|\right|_{\operatorname{tr}} &< \frac{2\alpha\beta}{\alpha^2+\beta^2}\text{.}
\end{align*}
\end{mytheorem}

\noindent Combining with \Cref{prop_cont} gives the no-gos.

\begin{proof}
Assume towards contradiction that a function $\mathcal{A}$ of Theorem \ref{thm_superpos} exists. We will use it to build a function excluded by \Cref{lem_2homog}. Let $u=(u_0, u_1)^T\in \hat{\mathbb{C}^2}$. We define vectors in $\hat{\h}$ that are homogeneous functions of $u$ by adding trailing zeros, namely
\begin{eqnarray*}
    \ket{u}:=\begin{pmatrix}
        u_0\\ u_1\\0\\ \vdots
    \end{pmatrix}
    \text{, }&
    \ket{u^{\pperp}}:=\begin{pmatrix}
        u^*_1\\ -u^*_0\\0\\ \vdots
    \end{pmatrix}\text{,}&
    \text{and }\bra{u^{\pperp}}=
    \begin{pmatrix}
        u_1 & -u_0 & 0 & \dots
    \end{pmatrix}
\end{eqnarray*}
are respectively $1$-homogeneous, $(-1)$-homogeneous, and $1$-homogeneous. 
Then the following function 
\begin{equation}
    g(u):=\bra{u^{\pperp}}\frac{\mathcal{A}\big(\ket{u}\!\bra{u},\ket{u^{\pperp}}\!\bra{u^{\pperp}}\big)}{\operatorname{tr}\left[\mathcal{A}(\ket{u}\!\bra{u},\ket{u^{\pperp}}\!\bra{u^{\pperp}})\right]}\ket{u}\text{,}\label{eq_g}
\end{equation}
is a continuous, $2$-homogeneous function $\hat{\mathbb{C}^2}\to\mathbb{C}$. Since the matrix inside the contractive map $\bra{u^{\pperp}}\cdot\ket{u}$ is positive semidefinite and trace-$1$, we have $|g(u)|\leq 1$. In the remaining part of the proof we show that $|g(u)|>0$ for all $u\in \hat{\mathbb{C}^2}$, which allows us to define $\hat{g}(u):=g(u)/|g(u)|$. The map $\hat{g}$ is also continuous and $2$-homogeneous and, moreover, maps into $\Sp^1$. The existence of such $\hat{g}$ contradicts \Cref{lem_2homog}. 

To prove $|g(u)|>0$ we first observe that even if $c(\ket{u}, \ket{u^{\pperp}})=0$, the set $S:=c^{-1}(\mathbb{C}\setminus\{0\})$ is dense in $\hat{\h}\times\hat{\h}$, so inside $S$ there exists a converging subsequence $(\ket{u_n}, \ket{u^{\pperp}_n})$ such that
\begin{align*}
    &\lim_{n\to\infty}\ket{u_n}=\ket{u}\\
    &\lim_{n\to\infty}\ket{u^{\pperp}_n}=\ket{u^{\pperp}}\text{.}
\end{align*}
We choose this notation because $\lim_{n\to\infty}\left\langle u^{\pperp}_n|u_n\right\rangle = 0$, but note that for a given $n$ the inner product $\left\langle u^{\pperp}_n|u_n\right\rangle$ might be nonzero. Our assumption towards contradiction implies
\begin{align}
\frac{2\alpha\beta}{\alpha^2+\beta^2}>&
    \lim_{n\to\infty} \bigg|\bigg|\frac{\mathcal{A}(\ket{u_n}\!\bra{u_n},\ket{u_n^{\pperp}}\!\bra{u_n^{\pperp}})}{\operatorname{tr}\left[\mathcal{A}(\ket{u_n}\!\bra{u_n},\ket{u_n^{\pperp}}\!\bra{u_n^{\pperp}})\right]} - 
    \frac{\ket{s_{\operatorname{c}}{\scriptstyle(u_n,u_n^{\pperp})}}\!\bra{s_{\operatorname{c}}{\scriptstyle(u_n,u_n^{\pperp})}}}{\left\langle s_{\operatorname{c}}{\scriptstyle(u_n,u_n^{\pperp})}|s_{\operatorname{c}}{\scriptstyle(u_n,u_n^{\pperp})}\right\rangle}\bigg|\bigg|_{\operatorname{tr}}\text{.}\label{eq_lim}
\end{align}

Denote by $X_n$ the entire expression inside the trace norm above. $X_n$ is Hermitian. For any positive trace-nonincreasing superoperator $\Psi$, $\left|\left|X_n\right|\right|_{\tr}\geq \left|\left|\Psi(X_n)\right|\right|_{\tr}$. Choose $\Psi(X_n) = \ket{0}\!\bra{u^{\pperp}}X_n\ket{u}\!\bra{0}+\ket{1}\!\bra{u}X_n\ket{u^{\pperp}}\!\bra{1}$. This $\Psi$ is (completely) positive trace-nonincreasing, because it corresponds to the conjugation by the operator $\ket{00}\!\bra{u^{\pperp}}+\ket{11}\!\bra{u}$ followed by a partial trace. Since $X_n$ is Hermitian, we get $\left|\left|X_n\right|\right|_{\tr}\geq \left|\left|\Psi(X_n)\right|\right|_{\tr}=2\left|\bra{u^{\pperp}}X_n\ket{u}\right|$. Substituting this into \eqref{eq_lim}, cancelling the factors of $2$, and using the triangle inequality, we get
\begin{align}
\frac{\alpha\beta}{\alpha^2+\beta^2}>&
    \lim_{n\to\infty}\bigg[-\left|\bra{u^{\pperp}}\frac{\mathcal{A}(\ket{u_n}\!\bra{u_n},\ket{u_n^{\pperp}}\!\bra{u_n^{\pperp}})}{\operatorname{tr}\left[\mathcal{A}(\ket{u_n}\!\bra{u_n},\ket{u_n^{\pperp}}\!\bra{u_n^{\pperp}})\right]}\ket{u}\right|
    + 
    \left|\frac{\left\langle u^{\pperp} |s_{\operatorname{c}}{\scriptstyle(u_n,u_n^{\pperp})}\right\rangle\left\langle s_{\operatorname{c}}{\scriptstyle(u_n,u_n^{\pperp})}|u\right\rangle}{\left\langle s_{\operatorname{c}}{\scriptstyle(u_n,u_n^{\pperp})}|s_{\operatorname{c}}{\scriptstyle(u_n,u_n^{\pperp})}\right\rangle}\right|\bigg]\label{eq_lim2}
\end{align}
Inside the limit, the first term converges to $|g(u)|\in[0,1]$. Denoting $c(\ket{u_n},\ket{u_n^{\pperp}})=:c_n=|c_n|e^{i\gamma_n}$ the second term converges to
\begin{align*}
    \frac{|c_n|^2\left|\bra{u^{\pperp}}\left(\alpha \left|u_n\right\rangle+e^{i\gamma_n}\beta\left|u^{\pperp}_n\right\rangle\right)\left(\alpha \left\langle u_n\right|+e^{-i\gamma_n}\beta\left\langle u^{\pperp}_n\right|\right)\ket{u}\right|}{|c_n|^2\left(\alpha^2+\beta^2+2\alpha\beta\operatorname{Re}\left[e^{i\gamma_n}\left\langle u^{\pperp}_n|u_n\right\rangle\right]
    \right)} 
    &\xrightarrow{n\to\infty}\frac{|e^{i\gamma_n}\beta\alpha|}{\alpha^2+\beta^2}\text{.}
\end{align*}
Inequality \eqref{eq_lim2} becomes
\begin{equation*}
    \frac{\alpha\beta}{\alpha^2+\beta^2}>-\left|g(u)\right| + 
    \frac{|e^{i\gamma_n}\beta\alpha|}{\alpha^2+\beta^2}
    \text{,}
\end{equation*}
which implies $|g(u)|>0$, the missing step to establish the contradiction with \Cref{lem_2homog}.
\end{proof}

\begin{proof}[Proof of \Cref{lem_2homog}]
We parametrize paths on $\hat{\mathbb{C}^2}$ by the parameter $t\in[0,1]$.  
The following paths are loops on $\hat{\mathbb{C}^2}$ based at a fixed point $\ket{u_b}\in\hat{\mathbb{C}^2}$:
\begin{eqnarray*}
    f(t):=\ket{u_b}\text{, } &&
    f'(t):=e^{i2\pi t}\ket{u_b}\text{.}
\end{eqnarray*}
Namely, $f$ is constant and $f'$ loops in its global phase. The paths are continuously contractible to each other, or homotopic, because on $\hat{\mathbb{C}^2}\approx \Sp^3$ any loop is continuously shrinkable to a point -- we say that $\Sp^3$ is simply connected. See Fig. \ref{fig_sphere} for an intuition.

Assume towards contradiction that there is a function $g:\hat{\mathbb{C}^2}\to \Sp^1$ which is continuous and $m$-homogeneous for $m\neq 0$, $m\in\mathbb{Z}$. Composing $g$ onto our loops on $\hat{\mathbb{C}^2}$ gives us loops on the circle, $\Sp^1$: 
\begin{eqnarray*}
    (g\circ f)(t)&=&g(\ket{u_b})=:\lambda_b\\
    (g\circ f')(t)&=&g(e^{i2\pi t}\ket{u_b})=e^{i 2\pi t m}\lambda_b\text{.}
\end{eqnarray*}
The function $g\circ f$ is constant -- it corresponds to zero loops on $\Sp^1$, while $g\circ f'$ by the $m$-homogeneity of $g$ makes $|m|$ loops on $\Sp^1$. By the continuity of $g$ the loops $g\circ f$ and $g\circ f'$ must also be continuously deformable to each other. But on the circle $\Sp^1$ this is impossible for different numbers of loops. 
\end{proof}

\begin{figure}[b]
    \centering
    \includegraphics{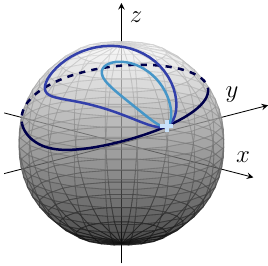}
    \vspace{1.5em}
    \caption{Any closed loop on $\Sp^2$ is continuously contractible to a point. The same is true for $\Sp^3$ which is homeomorphic to $\hat{\mathbb{C}^2}$. We say that these spaces are simply connected.}
    \label{fig_sphere}
    \vspace{1em}
\end{figure}

\section{State tomography for superposition}\label{sec:tomography}
Here we prove Results \hyperref[res2]{2} and \hyperref[res3]{3} about vector tomography. Instead of outputting a classical description of a pure density state in $\D_\text{pure}(\h)$, vector tomography outputs a classical description of a vector in $\hat{\h}$. However, different vectors in $\hat{\h}$ match the same rank-1 matrix in $\D_\text{pure}(\h)$. Choosing one (canonical) vector for each matrix is the purpose of the function $\vv: T\to\hat{\h}$, where $T$ is dense in $\D_\text{pure}(\h)$. The requirement $\vv(\rho)\vv(\rho)^\dagger=\rho$ ensures that the vector matches the matrix. An example is
\begin{equation}
    \vv_0(\rho) =  \frac{\rho\ket{0}}{\sqrt{\bra{0}\rho\ket{0}}}\text{,}\label{eq:v_example}
\end{equation}
which accepts inputs with a nonzero first matrix element, $T=\{\rho \in \D_\text{pure}(\h);\bra{0}\rho\ket{0}\neq 0\}$, and chooses a vector with the first element $\bra{0}\vv_0(\rho)$ real positive. 

Vector tomography should satisfy a precision guarantee similar to that of state tomography. State tomography protocols \cite{sugiyama2013precision,christandl2012reliable} satisfy a guarantee in terms of trace norm $||\cdot||_{\tr}$; given $N$ samples of $\rho$ they output a classical description $x$ from a set of values $ T_N$ according to a probability distribution $p_{N,\rho}(x)$ such that
\begin{equation}
    Pr_{x\sim p_{N,\rho}(\cdot)}\left[\left|\left|x-\rho\right|\right|_{\tr}\leq \epsilon_N^{\tr}\right]\geq 1-\delta_N^{\tr}\text{,}\label{eq:tom_tr}
\end{equation}
where $\lim_{N\to\infty}\epsilon_N^{\tr}=0$ and  $\lim_{N\to\infty}\delta_N^{\tr}=0$. The classical output $x$ is a result of quantum measurements and a classical postprocessing, which can ensure $x\in \D_\text{pure}(\h)$ (as in our case $\rho\in \D_\text{pure}(\h)$). The procedure is in principle convertible to a fully quantum algorithm with a deferred measurement, so in \eqref{eq:tom_tr} we can take $x\in T_N\subset \D_\text{pure}(\h)$. This equation-\eqref{eq:tom_tr} characterization of state tomography inspires our following, equation-\eqref{eq_sttom} characterization of vector tomography.

\begin{mydef}[Vector tomography]\label{def_sttom}
A sequence of $ T_N$-algorithms is a \emph{vector tomography} if  $T:=\bigcup_{N=0}^{\infty}  T_N$ is dense in $\D_\text{pure}(\h)$, if there exists $\vv: T\to \hat{\h}$ such that $\vv(\rho)\vv(\rho)^\dagger=\rho$, and if the $N$-th algorithm works as follows: Its quantum part on input $\rho^{\otimes N}$ where $\rho\in \mathcal{D}_\text{pure}(\h)$ outputs a measurement outcome $x\in T_N$ with probability $p_{N,\rho}(x)$, its classical part then calculates $\vv(x)$, and its probability distribution $p_{N,\rho}$ for all $\rho\in T$ satisfies
\begin{equation}
    Pr_{x\sim p_{N,\rho}(\cdot)}\left[\left|\left|\vv(x)-\vv(\rho)\right|\right|_2\leq \epsilon_N\right]\geq 1-\delta_N\text{,}\label{eq_sttom}
\end{equation}
with $\lim_{N\to\infty}\epsilon_N=0$ and $\lim_{N\to\infty}\delta_N=0$.
\end{mydef}

Vector tomography of \Cref{def_sttom} gives a trace-preserving protocol. The $N$-th algorithm, including calculating $\vv(x)$ and writing it onto a classical register, can be made fully quantum, with a deferred measurement. Just before the deferred measurement, the reduced state on that classical register is a probabilistic mixture of basis states, each basis state indexed by $\vv(x)\in \vv( T_N)\subset \hat{\h}$:
    
    \begin{equation}
        \Psi(\rho^{\otimes N}):=\sum_{x\in T_N}p_{N,\rho}(x)\ket{e_{\vv(x)}}\!\bra{e_{\vv(x)}}\text{,}
    \end{equation}
where $\Psi$ is the $N$-th algorithm's CPTP. We measure the output’s closeness to the target $\vv(\rho)$ by using the linear map $h:e_i e_i^T\mapsto i$ from diagonal matrices to $\vv( T_N)$. The triangle inequality and condition \eqref{eq_sttom} imply that for all $\rho\in T$
    \begin{align}
        \left|\left|
    h\left(\Psi(\rho^{\otimes N})\right)-\vv(\rho)\right|\right|_2
    &=\left|\left|\sum_{x\in T_N}p_{N,\rho}(x)
    \vv(x)-\vv(\rho)\right|\right|_2\nonumber\\
    &\leq\sum_{x\in T_N}p_{N,\rho}(x)\left|\left|
    \vv(x)-\vv(\rho)\right|\right|_2\nonumber\\
    &
    \leq\sum_{x\text{ good}}p_{N,\rho}(x)\epsilon_N + \sum_{x\text{ bad}}p_{N,\rho}(x)\times 2\nonumber\\
    &\leq \epsilon_N + 2\delta_N\text{.}\label{eq_fN_close}
    \end{align}
In this sense, \Cref{def_sttom} implies a trace-preserving protocol. Since such a protocol would contradict the superposition impossibility, it cannot exist. Next, we show this directly.

\begin{mytheorem}
    The trace-preserving protocol $\Psi$ is impossible. Consequently, vector tomography of \Cref{def_sttom} is impossible.
\end{mytheorem}
\begin{proof}
\begin{figure}
    \centering
    \vspace{-10pt}
    \includegraphics[scale=0.8]
    {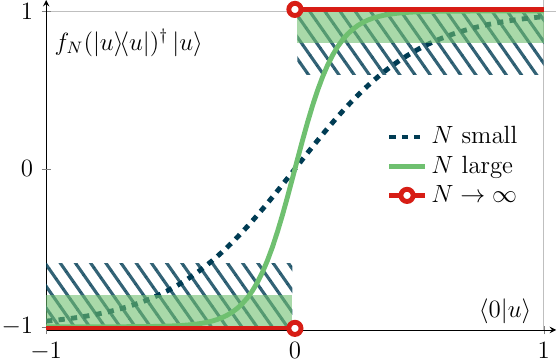}
    \caption{The contradiction illustrated on $\vv=\lim_{N\to\infty}f_N$ of \cref{eq:v_example}. This $\vv=\vv_0$ for a given matrix $\rho\in S\subset \D_\text{pure}(\h)$ outputs that corresponding vector whose first entry is real positive, $\left\langle 0|\vv_0(\rho)\right\rangle > 0$. The contradiction: functions $f_N$ should be both continuous and $(\epsilon_N+2\delta_N)$-close to $\vv$ (i.e., confined to the marked regions).}
    \label{fig:tomography_plot}
\end{figure}    
    The output state $\Psi(\rho^{\otimes N})$ is a continuous function of $\rho\in\D_\text{pure}(\h)$. Since $h$ is linear, the function
    \begin{equation}
        f_N(\rho):=h\circ\Psi(\rho^{\otimes N})=\mathbb{E}_{p_{N,\rho}}\left[\vv(x)\right]
    \end{equation}
    is also continuous in $\rho\in\D_\text{pure}(\h)$. Moreover, by \eqref{eq_fN_close} it is close to $\vv(\rho)$ for all $\rho\in T$. The continuity of $f_N(\rho)$ and its closeness to $\vv(\rho)$ lead to a contradiction illustrated in \Cref{fig:tomography_plot} and proven next. The function $g_N(\ket{u}):=f_N(\ket{u}\!\bra{u})^\dagger\ket{u}$ is continuous and $1$-homogeneous in $\ket{u}$. From \eqref{eq_fN_close} we get that whenever $\ket{u}\!\bra{u}\in T$
    \begin{align*}
        \left|g_N(\ket{u})\right|&\geq \left|\vv(\ket{u}\!\bra{u})\ket{u}\right|-\left|f_N(\ket{u}\!\bra{u})\ket{u}-\vv(\ket{u}\!\bra{u})\ket{u}\right|\\
        &\geq 1-(\epsilon_N+2\delta_N)\text{.}
    \end{align*}
    By the continuity of $g_N$ the above inequality actually holds for all $\ket{u}\!\bra{u}\in \D_\text{pure}(\h)$. Choosing $N$ large enough so that $\epsilon_N+2\delta_N<1$, we define $\hat{g}_N(\ket{u}):=g_N(\ket{u})/|g_N(\ket{u})|$ and restrict it to input vectors with all elements other than the first two fixed to zero. We get a continuous $1$-homogeneous function $\hat{\mathbb{C}^2}\to \Sp^1$, which contradicts \Cref{lem_2homog}.
\end{proof}

Intuitively, the problem is the impossibility of a continuous $\vv$. To see the impossibility, consider the above proof for a $\vv$ continuous (i.e., for $\epsilon_N=\delta_N=0$). To see that this is a problem, consider $\vv_0$ of \cref{eq:v_example}. For the $N$-th state tomography algorithm of \cref{eq:tom_tr} take any $\epsilon\leq \epsilon_N^{\tr}/4$ and
\begin{eqnarray*}
    \rho = \begin{pmatrix}
        \epsilon^2 & -\epsilon\sqrt{1-\epsilon^2}\\
        -\epsilon\sqrt{1-\epsilon^2} & 1-\epsilon^2
    \end{pmatrix}\text{, }&&
    x = \begin{pmatrix}
        \epsilon^2 & \epsilon\sqrt{1-\epsilon^2}\\
        \epsilon\sqrt{1-\epsilon^2} & 1-\epsilon^2
    \end{pmatrix}\text{,}
\end{eqnarray*}
and suppose that on input $\rho$ the algorithm outputs $x$ with high probability. This supposition is consistent with state tomography, because $||x-\rho||_{\tr}\leq \epsilon_N^{\tr}$. However,  we have $||\vv_0(x)-\vv_0(\rho)||_2=2\sqrt{1-\epsilon^2}\geq 2-\epsilon_N^{\tr}/2$, because $\rho=\ket{u}\!\bra{u}$ is too close to the discontinuity of $\vv_0$ (corresponding to $|\left\langle 0 | u \right\rangle | = \epsilon$ in \Cref{fig:tomography_plot}).  

To fix the problem, we would like to avoid the discontinuity in a way that still accepts completely unrestricted and unknown $\rho\in\D_\text{pure}(\h)$. The idea is to approximately locate each input $\rho$ and \emph{then} to choose a $\vv$ function that is continuous there. Instead of a single $\vv_0$, we consider multiple $\vv_i$, so that for every $\rho\in\D_\text{pure}(\h)$ there is $i\in\{0,1,\dots, d-1\}$ such that $\rho$ is far from the discontinuity of $\vv_i$. We suggest
\begin{equation}
    \vv_i(\rho) =  \frac{\rho\ket{i}}{\sqrt{\bra{i}\rho\ket{i}}}\text{.}\label{eq:v_i}
\end{equation}
Having sampled the estimate $x$ we apply $\vv_{r(x)}$, where $r(x)=\min\{i: \bra{i}x\ket{i}\geq 1/d\}$ determines which $\vv$ function to use\footnote{Other $r(x)$ functions are possible, as long as they are deterministic functions into $\{0, 1,\dots, d-1\}$. For example, $\arg\max_i \bra{i}x\ket{i}$ requires an additional tie-breaking rule, e.g. $r(x)=\min\{\arg\max_i \bra{i}x\ket{i}\}$.}. 
This way, the chosen vector function is always continuous around the sampled $x$, circumventing the impossibility and motivating the following modification.

\begin{mydef}[Vector tomography, revised]\label{def_sttom2}
A sequence of $ T_N$-algorithms is a \emph{vector tomography} if  $T:=\bigcup_{N=0}^{\infty}  T_N$ is dense in $\D_\text{pure}(\h)$, if there exists an indexed set $\{\vv_i\}_{i\in I}$ of maps $\vv_i: T\to \hat{\h}$ such that $\vv_i(\rho)\vv_i(\rho)^\dagger=\rho$, and if the $N$-th algorithm works as follows: Its quantum part on input $\rho^{\otimes N}$ where $\rho\in \mathcal{D}_\text{pure}(\h)$ outputs a measurement outcome $x\in T_N$ with probability $p_{N,\rho}(x)$, its classical part then calculates $r(x)\in I$ and $\vv_{r(x)}(x)$, and its probability distribution $p_{N,\rho}$ for all $\rho\in T$ satisfies
\begin{equation}
    Pr_{x\sim p_{N,\rho}(\cdot)}\left[\left|\left|\vv_{r(x)}(x)-\vv_{r(x)}(\rho)\right|\right|_2\leq \epsilon_N\right]\geq 1-\delta_N\text{,}\label{eq_sttom2}
\end{equation}
with $\lim_{N\to\infty}\epsilon_N=0$ and $\lim_{N\to\infty}\delta_N=0$.
\end{mydef}

While \Cref{def_sttom} yields a trace-preserving protocol, \Cref{def_sttom2} yields a random protocol, as we show next. This time the fully quantum version of the $N$-th algorithm needs an additional register to store $r(x)$. Before the deferred measurement, we have
    \begin{align}
        \sum_i\Psi_i(\rho^{\otimes N})\otimes\ket{i}\!\bra{i}=&\sum_{x\in T_N}p_{N,\rho}(x)\ket{e_{\vv_{r(x)}(x)}}\!\bra{e_{\vv_{r(x)}(x)}}\otimes\ket{r(x)}\!\bra{r(x)}\nonumber\\
        \Psi_i(\rho^{\otimes N})=&\sum_{x\in T_N}p_{N,\rho}(x, i)\ket{e_{\vv_{i}(x)}}\!\bra{e_{\vv_{i}(x)}}\text{,}
    \end{align}
\noindent where the joint probability distribution of the random variables $x$ and $i$ is $p_{N,\rho}(x, i)=\delta_{i, r(x)}p_{N,\rho}(x)$, because the conditional probability of $i$ given $x$ is deterministic and equals the Kronecker delta function $\delta_{i, r(x)}$. The second line follows from the first by measuring the last register in the computational basis (the deferred measurement that makes the protocol random). This time, the error is quantified by 
\begin{align}
    \sum_i\left|\left|
    h\left(\Psi_i(\rho^{\otimes N})\right)-p_{N,\rho}(i)\vv_i(\rho)\right|\right|_2 & =\sum_i\left|\left|\sum_{x\in T_N}p_{N,\rho}(x, i)
    \vv_i(x)-p_{N,\rho}(i)\vv_i(\rho)\right|\right|_2\nonumber\\
    & \leq\sum_i\sum_{x\in T_N}p_{N,\rho}(x, i)\left|\left|
    \vv_i(x)-\vv_i(\rho)\right|\right|_2\nonumber\\
    & =\sum_{x\in T_N}p_{N,\rho}(x)\left|\left|
    \vv_{r(x)}(x)-\vv_{r(x)}(\rho)\right|\right|_2\nonumber\\
    & \leq \epsilon_N + 2\delta_N\text{,}\label{eq_close_i}
\end{align}
where the second line expands the marginal $p_{N,\rho}(i)=\sum_{x\in T_N}p_{N,\rho}(x, i)$ and applies the triangle inequality, the third is due to the Kronecker delta function inside $p_{N,\rho}(x, i)$, and the fourth follows from condition \eqref{eq_sttom2} by splitting $x$ to “good” and “bad”. Satisfying the inequality \eqref{eq_close_i}, the quantum instrument $\{\Psi_i\}_i$ constitutes a random protocol as claimed. Thus, for the existence of a random protocol for vector tomography (Result \hyperref[res3]{3}), it suffices to find an instance of \Cref{def_sttom2}.

\begin{mytheorem}\label{thm_tom2}
    Any state tomography (satisfying \cref{eq:tom_tr}) gives a vector tomography of \Cref{def_sttom2} for the appropriate choice of $\vv$ functions.
\end{mytheorem}

\Cref{app_possibilities_randomness} contains the proof and the application to the superposition task. 
The proof hinges on the fact that a small trace-norm distance of pure states $\left|\left|\ket{w}\!\bra{w}-\ket{u}\!\bra{u}\right|\right|_{\tr}$ implies a small Euclidean-norm distance of vectors $\left|\left|\ket{w}-\langle u |w \rangle \ket{u}\right|\right|_2$. This is enough, because, as opposed to \Cref{def_sttom}, \Cref{def_sttom2} does not require removing the $w$-dependence from $\langle u |w \rangle \ket{u}$ (i.e., the $x$-dependence from $\vv_{r(x)}(\rho)$). In \Cref{app_possibilities_randomness} we use this vector tomography for the superposition task, obtaining a random protocol:

\begin{mycor}[Random superposition protocol]\label{thm_superposition_i}
    Let $\alpha,\beta\in\mathbb{R}_+$. Let $\Srs{ij}:=\alpha\vv_i(\rho)+\beta\vv_j(\sigma)$ where $\vv_i,\vv_j$ are given by \cref{eq:v_i}. There exists a random protocol for $\Srs{ij}$, i.e., for any $\epsilon>0$ there exists $N$ and a quantum instrument $\{\Phi_{ij}\}_{ij}$ such that
    \begin{align}
        &\sum_{ij}\bigg|\bigg|\Phi_{ij}(\rho^{\otimes N}\otimes\sigma^{\otimes N}) - \tr\left[\Phi_{ij}(\rho^{\otimes N}\otimes\sigma^{\otimes N})\right]
        \frac{\Srs{ij}\!\Srsbra{ij}}{\left\langle s_{ij}{\scriptstyle(\rho,\sigma)}|s_{ij}{\scriptstyle(\rho,\sigma)}\right\rangle}\bigg|\bigg|_{\operatorname{tr}}\leq \epsilon
    \end{align}
    holds for all pairs $\rho,\sigma\in\D_\text{pure}(\h)$ such that $\Srs{ij}\neq 0$.
\end{mycor}

\section{Discussion}\label{sec:discussion}
The first consequence of the superposition impossibility follows already from the work of Oszmaniec et al. \cite{oszmaniec2016creating} and becomes more pronounced by our generalization: Quantum computation is set apart from classical computation. We can first compare it to \emph{deterministic} classical computation. Such a comparison seems natural, because the quantum state tree \cite{aaronson2004multilinear} strongly resembles the (deterministic) arithmetic formula tree \cite{Raz_2013}. The resulting contrast is in the availability of their respective building blocks: the quantum $+$ is impossible while the arithmetic $+$ is easy. This is analogous to cloning: quantum cloning is impossible, while deterministic classical cloning is easy.

A more demanding comparison is to \emph{probabilistic} classical computation. Cloning probability vectors -- equivalent to cloning diagonal density matrices -- is impossible \cite{Barnum_1996}\footnote{In addition to their result on broadcasting, ref. \protect{\cite{Barnum_1996}} shows that a set of density states is clonable only if those state's pairwise fidelities are $0$ or $1$, i.e. only if the states in the set are pairwise orthogonal or identical. This is not true for the set of diagonal density matrices.}. Thus, the task of cloning does not separate quantum and \emph{probabilistic} classical computation. The superposing task, the quantum $+$, does! Weighted addition of unknown quantum states is impossible by ref. \cite{oszmaniec2016creating} and our generalization, while weighted addition of unknown probability vectors is done by the classical probabilistic protocol in \Cref{fig:p-swap}. From this perspective, the superposition impossibility is even more distinctly quantum than the no-cloning.

Furthermore, our work has new consequences: the separations between different \emph{quantum} computational models. The random model is strictly stronger than the trace-preserving or postselection model, when asked to implement certain tasks (functions). In terms of sample complexity, these separations are infinite! The separating task has restricted inputs (a ‘partial function’) or unrestricted inputs (a ‘total function’). Consider the superposition task with inputs restricted to orthogonal pairs. Replace all occurrences of $\hat{\h}\times\hat{\h}$ in \Cref{thm_superpos} and its proof by the set of orthonormal pairs. The result is a postselection protocol no-go for orthogonal inputs. Comparing this to the one-sample random protocol of \cite{doosti2017universal} reveals the advantage of the random model on restricted inputs. 
With unrestricted inputs (primarily considered in this work), we have shown superposition and vector tomography no-go results for any sample complexity in the postselection (or trace-preserving) model. However, both tasks have at most exponential sample complexity in the random model. There, their precise sample complexities remain open.

As the second use of the topological method of ref. \cite{gavorova2024topological}, our superposition no-go elucidates the method's applicability. When no algorithm can implement a certain function (task), the topological method might be able to prove so, if both of its two steps are applicable: 1) the existence of an algorithm imposes continuity on the function, 2) topology shows that the required function cannot be continuous. We can reverse the direction, first asking which functions cannot be continuous to then find the tasks for which the topological method is relevant.
The topologically forbidden functions in this work are of the same type as those in \cite{gavorova2024topological}: sections and continuous homogeneous functions to $\Sp^1$. 
While the latter facilitate quantitative no-gos, the former, sections, provide intuition for the past tasks and a way to recognize future use cases. Specifically, the superposition and the controlled unitary \cite{gavorova2024topological} both seem to involve choosing an unphysical representative (vector or operator) for the physical input (density matrix or superoperator). No section means no representative can be chosen continuously. The method may work for other tasks that involve such choices. More generally, the tasks must be quantum. Consistent with the fact that boolean functions are always computable by a complex enough algorithm, the topological method is restricted to tasks (functions) with topologically nontrivial output spaces, e.g. density states or superoperators. In this sense, and as opposed to the polynomial method, the topological method is purely quantum.  Quantum outputs are inherent in state complexity \cite{aaronson2004multilinear}, quantum state synthesis \cite{irani2021quantum,ji2018pseudorandom,kretschmer:LIPIcs.TQC.2021.2} and unitary synthesis \cite{haferkamp2022linear,susskind2018black,bouland_et_al:LIPIcs:2020:11748}. 

\section*{Acknowledgments}
The author would like to thank Alon Dotan for helpful discussions of algebraic topology, Dorit Aharonov, Micha{\l} Oszmaniec, Itai Leigh and Michalis Skotiniotis for useful feedback. This work was supported by the Simons Foundation (Grant No. 385590), by the Israel Science Foundation (Grants No. 2137/19 and No. 1721/17), by the European Commission QuantERA grant 
  ExTRaQT (Spanish MICINN project No. PCI2022-132965), by the Ministry for Digital Transformation and of Civil Service of the Spanish Government through the QUANTUM ENIA project call, Quantum Spain project, by the European Union through the Recovery, Transformation and Resilience Plan - NextGenerationEU within the framework of the Digital Spain 2026 Agenda, by Spanish Agencia Estatal de Investigación (Project No. PID2022-141283NB-I00).

\appendix

\section{Superposition impossibility via no-section}\label{sec_alt_proof}

This section supports the intuition we gave in \Cref{sec_method}, that our impossibility relates to the nonexistence of continuous functions mapping a pure density state to some corresponding vector $\ket{u}\!\bra{u}\mapsto \ket{\widetilde{u}}$. Noting that $\Sp^{2d-1}$ is the space of unit vectors in $\mathbb{R}^{2d}$, or, alternatively, unit vectors in $\mathbb{C}^d$, we restate this formally:

\begin{mylemma}\label{thm_alg-top}
    Let $d\in\mathbb{N}$, $d\geq 2$. Let $\pi:\Sp^{2d-1}\to\mathcal{D}_\text{pure}(\mathbb{C}^d)$ be the outer product function $\pi(\left|u\right>)=\ket{u}\!\bra{u}$. There is no continuous function $s: \mathcal{D}_\text{pure}(\mathbb{C}^d)\to \Sp^{2d-1}$ such that $\pi \circ s=id$.
\end{mylemma}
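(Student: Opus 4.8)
The plan is to reduce the statement to the already-established \Cref{lem_2homog} by a construction parallel to the vector tomography argument, with the section $s$ playing the role that a continuous $vec$ function plays there. I would assume toward contradiction that a continuous section $s:\mathcal{D}_\text{pure}(\mathbb{C}^n)\to S^{2n-1}$ with $\pi\circ s=\mathrm{id}$ exists. Since $s(\rho)$ is then a unit vector with $s(\rho)s(\rho)^\dagger=\rho$, for every pure $\rho=\ket{u}\bra{u}$ we have $s(\ket{u}\bra{u})=c\,\ket{u}$ for some phase $c\in S^1$; in other words $s$ is exactly a continuous choice of canonical vector, i.e. a continuous $vec$ function in the sense of \cref{eq_vec}. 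The obstruction will come from feeding a single $S^3$ worth of inputs into $s$ and extracting a circle-valued homogeneous map.

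First I would use $n\geq 2$ to embed the relevant sphere. For a unit vector $\ket{x}\in S^3\subset\mathbb{C}^2$, let $\ket{u}\in\mathbb{C}^n$ be $\ket{x}$ padded with zeros. Then $\ket{u}$ is a continuous, $1$-homogeneous function of $\ket{x}$, while $\ket{u}\bra{u}$ is continuous and $0$-homogeneous, so the vector $s(\ket{u}\bra{u})\in S^{2n-1}$ is a continuous, $0$-homogeneous function of $\ket{x}$. Next I would define the candidate obstruction
\[
    f(\ket{x}):=\left(s(\ket{u}\bra{u})^\dagger\ket{u}\right)^2,
\]
and verify three properties. Continuity is immediate from continuity of $s$, of the padding, and of the inner product. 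The image lies in $S^1$ because $s(\ket{u}\bra{u})=c\ket{u}$ with $|c|=1$ forces $\left|s(\ket{u}\bra{u})^\dagger\ket{u}\right|=\left|c^*\langle u|u\rangle\right|=1$, and squaring preserves unit modulus. For $2$-homogeneity, replacing $\ket{x}$ by $e^{i\theta}\ket{x}$ leaves $\ket{u}\bra{u}$ (hence $s(\ket{u}\bra{u})$) unchanged while sending $\ket{u}\mapsto e^{i\theta}\ket{u}$, so $s(\ket{u}\bra{u})^\dagger\ket{u}\mapsto e^{i\theta}\,s(\ket{u}\bra{u})^\dagger\ket{u}$ and thus $f(e^{i\theta}\ket{x})=e^{2i\theta}f(\ket{x})$. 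Therefore $f:S^3\to S^1$ is continuous and $2$-homogeneous, contradicting \Cref{lem_2homog} and completing the proof.

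The genuine topological content is entirely carried by \Cref{lem_2homog}; the work here is only the reduction, so I do not expect a serious obstacle. The one point that needs care is the bookkeeping of the homogeneity degree: the raw quantity $s(\ket{u}\bra{u})^\dagger\ket{u}$ is $1$-homogeneous, and the squaring is exactly what converts it to the degree $2$ for which \Cref{lem_2homog} is stated. I would also take care to use $\langle s(\ket{u}\bra{u})|u\rangle$ rather than its conjugate, so as to land on homogeneity $+2$ rather than $-2$. Finally I would remark that the hypothesis $n\geq 2$ is essential and is precisely what makes the embedding of $S^3$ possible; for $n=1$ the domain $\mathcal{D}_\text{pure}(\mathbb{C}^1)$ is a single point and a section trivially exists.
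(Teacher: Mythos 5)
Your proof is correct, but it takes a genuinely different route from the paper's. The paper proves \Cref{thm_alg-top} in two steps: \Cref{claim_complex-projective} identifies $\mathcal{D}_\text{pure}(\mathbb{C}^n)$ with the complex projective space $\operatorname{\mathbb{C}P}^{n-1}$ (via the universal property of the quotient plus a compact-to-Hausdorff bijection argument), and \Cref{claim_alg-top} then rules out a section of the quotient map $p:S^{2n-1}\to\operatorname{\mathbb{C}P}^{n-1}$ by a homology obstruction: $id_*=p_*\circ s_*$ on $H_{2n-2}$ would factor $H_{2n-2}(\operatorname{\mathbb{C}P}^{n-1})=\mathbb{Z}$ through $H_{2n-2}(S^{2n-1})=0$. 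You instead observe that a section $s$ is exactly a continuous $vec$ function in the sense of \cref{eq_vec}, and run the phase-extraction construction $f(\ket{x}):=\left(s(\ket{u}\!\!\bra{u})^\dagger\ket{u}\right)^2$ on an embedded $S^3$, reducing everything to \Cref{lem_2homog} --- which is, in fact, verbatim the argument the paper uses to prove its vector-tomography impossibility, just with $s$ in place of $vec$; your bookkeeping of the homogeneity degree and of the $n\geq 2$ hypothesis is right, and there is no circularity since \Cref{lem_2homog} is proved independently by the loop-counting argument. Your route is more elementary (winding numbers only, no singular homology and no quotient-space homeomorphism) and even yields a slightly stronger conclusion, since it only uses the restriction of $s$ over a single projective line $\operatorname{\mathbb{C}P}^1\subset\operatorname{\mathbb{C}P}^{n-1}$. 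What the paper's route buys is the explicit identification with the Hopf-type fibration $S^{2n-1}\to\operatorname{\mathbb{C}P}^{n-1}$ and a self-contained obstruction in terms of the global topology of projective space, which is the whole point of \Cref{sec_alt_proof}: to exhibit the superposition impossibility as an instance of the no-section principle, independent of the homogeneous-function machinery of the main text. One could note that your reduction shows the two appendix claims are not strictly needed for the lemma itself, but as a standalone proof of the stated lemma, yours is complete.
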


\noindent The term \emph{section} refers to a continuous function that composes with a projection into the identity function $id$, as above. We use Lemma \ref{thm_alg-top} to prove this special-case version of \Cref{thm_superpos}:

\begin{mytheorem}
 Let $\alpha,\beta\in\mathbb{R}_+$ and suppose $\dim\h\geq 3$. Let any function $c:\hat{\h}\times\hat{\h}\to\mathbb{C}$, such that $S:=c^{-1}(\mathbb{C}\setminus\{0\})$ is dense in $\hat{\h}\times\hat{\h}$, determine the superposition $\Suv{c}=\alpha \left|c(u,v)\right|\ket{u}+\beta\, c(u,v)\ket{v}$. There is no continuous function $\mathcal{A}:\mathcal{D}_\text{pure}(\h)\times \mathcal{D}_\text{pure}(\h)\to \mathcal{D}_+(\h)$ such that for all $(\ket{u},\ket{v})\in S$ 
\begin{equation}
\frac{\mathcal{A}(\ket{u}\!\bra{u},\ket{v}\!\bra{v})}{\operatorname{tr}\left[\mathcal{A}(\ket{u}\!\bra{u},\ket{v}\!\bra{v})\right]} = 
    \frac{\Suv{c}\!\Suvbra{c}}{\left\langle s_{\operatorname{c}}{\scriptstyle(u,v)}|s_{\operatorname{c}}{\scriptstyle(u,v)}\right\rangle}\text{.}\label{eq:superpos_equality}
\end{equation}
\end{mytheorem}
\begin{proof}
Assume towards contradiction that such $\mathcal{A}$ exists. Let $\ket{\gamma}\in\hat{\h}$ be some fixed vector. 
Denote by $\mathcal{K}_\gamma$ the subspace of $\h$ orthogonal to $\ket{\gamma}\in\h$. The dimension of $\mathcal{K}_\gamma$ is $d=\dim\h-1\geq 2$. Next we define the function $h_\gamma: \mathcal{D}_\text{pure}(\mathcal{K}_\gamma)\to \mathcal{K}_\gamma$:
\begin{equation*}
    h_\gamma(\ket{u}\!\bra{u}):=\frac{\alpha^2+\beta^2}{\alpha\beta}\ket{u}\!\bra{u}\frac{\mathcal{A}(\ket{u}\!\bra{u},\ket{\gamma}\!\bra{\gamma})}{\operatorname{tr}\left[\mathcal{A}(\ket{u}\!\bra{u},\ket{\gamma}\!\bra{\gamma})\right]}\ket{\gamma}.
\end{equation*}
Since $\mathcal{A}$ is continuous, so is $h_\gamma$. Let $(\ket{u_n},\ket{\gamma_n})$ be a sequence in $S$ that converges to $(\ket{u},\ket{\gamma})$. Then, in particular $\ket{u_n}\xrightarrow{n\to\infty}\ket{u}$ and
\begin{equation}
    h_\gamma(\ket{u}\!\bra{u})= \lim_{n\to\infty} h_{\gamma_n}(\ket{u_n}\!\bra{u_n}) = \lim_{n\to\infty} \frac{c(u_n,\gamma_n)^*}{\left|c(u_n,\gamma_n)\right|}\left|u_n\right>\text{,}\label{eq:limit_prod}
\end{equation}
where the second equality follows by substituting \cref{eq:superpos_equality} and removing the additive terms that converge to zero because $\left<\gamma_n | u_n\right>\xrightarrow{n\to\infty} 0$. Similarly we get that $\bra{u}h_\gamma(\ket{u}\!\bra{u})=\lim_{n\to\infty}\tfrac{c(u_n,\gamma_n)^*}{\left|c(u_n,\gamma_n)\right|}$. In other words, the fractions converge to $\lambda_u:=\bra{u}h_\gamma(\ket{u}\!\bra{u})$, and since the $n$-th fraction has norm $1$, we have $|\lambda_u|=1$. \Cref{eq:limit_prod} is the limit of the product of two converging sequences, therefore it equals the product of the limits, $h_\gamma(\ket{u}\!\bra{u})=\lambda_u\ket{u}$. Since $\mathcal{K}_\gamma\approx \mathbb{C}^d$ we got a function $h_\gamma:\mathcal{D}_\text{pure}(\mathbb{C}^d)\to \Sp^{2d-1}$ that contradicts Lemma \ref{thm_alg-top}.
\end{proof}

Next we prove Lemma \ref{thm_alg-top} by first showing that $\mathcal{D}_\text{pure}(\mathbb{C}^d)$ is homeomorphic to the complex projective space $\operatorname{\mathbb{C}P}^{d-1}$ (\Cref{claim_complex-projective}), then showing that there is no section $\operatorname{\mathbb{C}P}^{d-1}\to \Sp^{2d-1}$ (\Cref{claim_alg-top}).

\begin{myclaim}\label{claim_complex-projective}
    $\mathcal{D}_\text{pure}(\mathbb{C}^d)$ is homeomorphic to the complex projective space $\operatorname{\mathbb{C}P}^{d-1}$.
\end{myclaim}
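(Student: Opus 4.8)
The plan is to produce an explicit bijection from the outer-product map and then upgrade it to a homeomorphism by the compact-to-Hausdorff principle, so that I never have to check continuity of an inverse by hand. Recall that $\mathbb{C}P^{n-1}$ is the quotient $S^{2n-1}/S^1$, where the circle acts on unit vectors by global phase, $\lambda\cdot\ket{u}=\lambda\ket{u}$; write $q:S^{2n-1}\to\mathbb{C}P^{n-1}$ for the quotient map and $[\ket{u}]$ for the class of $\ket{u}$. The outer-product map $\pi:S^{2n-1}\to\mathcal{D}_\text{pure}(\mathbb{C}^n)$, $\pi(\ket{u})=\ket{u}\!\!\bra{u}$, is continuous and phase-invariant, $\pi(\lambda\ket{u})=\lambda\lambda^*\ket{u}\!\!\bra{u}=\pi(\ket{u})$ for $\lambda\in S^1$, so it is constant on the fibers of $q$. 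By the universal property of the quotient topology, $\pi$ descends to a unique continuous map $\Phi:\mathbb{C}P^{n-1}\to\mathcal{D}_\text{pure}(\mathbb{C}^n)$ with $\Phi\circ q=\pi$, explicitly $\Phi([\ket{u}])=\ket{u}\!\!\bra{u}$.

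Next I would verify that $\Phi$ is a bijection. Surjectivity is immediate from the definition of $\mathcal{D}_\text{pure}(\mathbb{C}^n)$: every element is $\ket{u}\!\!\bra{u}$ for some normalized $\ket{u}$, hence lies in the image. For injectivity, suppose $\ket{u}\!\!\bra{u}=\ket{v}\!\!\bra{v}$ and apply both operators to $\ket{u}$; the left side returns $\ket{u}$ and the right side returns $\langle v|u\rangle\ket{v}$, so $\ket{u}=\langle v|u\rangle\ket{v}$. Since $\ket{u}$ is a unit vector the coefficient $\langle v|u\rangle$ is nonzero, and comparing norms gives $|\langle v|u\rangle|=1$, i.e. $\ket{u}$ and $\ket{v}$ differ by a phase $\lambda\in S^1$. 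Therefore $[\ket{u}]=[\ket{v}]$ and $\Phi$ is injective.

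Finally I would promote the continuous bijection $\Phi$ to a homeomorphism topologically rather than by exhibiting an inverse. The sphere $S^{2n-1}$ is compact, so its continuous image $\mathbb{C}P^{n-1}=q(S^{2n-1})$ is compact; the target $\mathcal{D}_\text{pure}(\mathbb{C}^n)$ sits inside the finite-dimensional normed space $L(\mathbb{C}^n)$ and is therefore Hausdorff. A continuous bijection from a compact space to a Hausdorff space is automatically a homeomorphism, since it is a closed map: closed subsets of the compact domain are compact, their images are compact, and compact subsets of a Hausdorff space are closed. Applying this to $\Phi$ yields the claimed homeomorphism. I do not expect a genuine obstacle here; the only points needing care are making the phase cancellation in the factorization through $q$ precise and recording that $\mathbb{C}P^{n-1}$ is compact, both of which are routine.
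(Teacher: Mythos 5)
Your proposal is correct and follows essentially the same route as the paper: factor $\pi$ through the quotient via the universal property, check bijectivity, and conclude with the compact-to-Hausdorff principle. The only (cosmetic) difference is in the injectivity step, where you apply the equal operators $\left|u\right>\!\!\bra{u}=\left|v\right>\!\!\bra{v}$ to $\ket{u}$ to extract the phase, while the paper compares a nonzero column of the two matrices -- both arguments are equally valid.
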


If we regard the vectors in $\Sp^{2d-1}$ as the $d$-dimensional complex vectors of norm $1$, then complex projective space is the quotient space $\operatorname{\mathbb{C}P}^{d-1}=\Sp^{2d-1}/\sim\,$ under the identification $\left|u\right>\sim e^{i\alpha}\left|u\right>$ for any real $\alpha$.

\begin{myclaim}\label{claim_alg-top}
    Let $d\in\mathbb{N}$, $d\geq 2$. Let $p:\Sp^{2d-1}\to \operatorname{\mathbb{C}P}^{d-1}$ be the quotient map. There is no continuous function $s: \operatorname{\mathbb{C}P}^{d-1}\to \Sp^{2d-1}$ such that $p \circ s=id$.
\end{myclaim}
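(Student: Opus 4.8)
The plan is to invoke the functoriality of homotopy groups together with the fact that $p$ is the (generalized) Hopf fibration. Concretely, $p$ is a principal $S^1$-bundle $S^1\hookrightarrow S^{2n-1}\xrightarrow{p}\mathbb{C}P^{n-1}$, whose fiber over a class $[\ket{u}]$ is the circle of phases $\{e^{i\alpha}\ket{u}:\alpha\in\mathbb{R}\}\cong S^1$. Being a fiber bundle, $p$ is a Serre fibration, so its long exact sequence of homotopy groups applies. The crux is a computation in degree two.

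First I would pin down the relevant homotopy groups. Since $n\geq 2$ we have $2n-1\geq 3$, and the sphere $S^{2n-1}$ is $(2n-2)$-connected; as $2n-2\geq 2$ this gives $\pi_1(S^{2n-1})=\pi_2(S^{2n-1})=0$. Feeding this into the long exact sequence of the fibration,
$$\pi_2(S^{2n-1})\to \pi_2(\mathbb{C}P^{n-1})\xrightarrow{\ \partial\ }\pi_1(S^1)\to \pi_1(S^{2n-1}),$$
the two outer groups vanish, so the connecting homomorphism is an isomorphism $\pi_2(\mathbb{C}P^{n-1})\cong\pi_1(S^1)=\mathbb{Z}$. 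Thus we have the asymmetry $\pi_2(S^{2n-1})=0$ while $\pi_2(\mathbb{C}P^{n-1})=\mathbb{Z}$.

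With these values in hand the contradiction is purely formal. Suppose a continuous section $s$ exists, i.e. $p\circ s=\mathrm{id}$. Choosing compatible basepoints and applying the functor $\pi_2$, functoriality gives $p_*\circ s_*=(p\circ s)_*=\mathrm{id}$ on $\pi_2(\mathbb{C}P^{n-1})=\mathbb{Z}$. But $s_*$ lands in $\pi_2(S^{2n-1})=0$, so $s_*=0$ and hence $p_*\circ s_*=0$. This forces the identity map on $\mathbb{Z}$ to equal the zero map, a contradiction; therefore no section exists.

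The only content-bearing step is the identification $\pi_2(\mathbb{C}P^{n-1})=\mathbb{Z}$, and I expect this to be the main point to get right; the rest is a standard retraction argument. If one prefers to avoid the fibration's exact sequence, the same value follows from the CW structure of $\mathbb{C}P^{n-1}$ (cells only in even dimensions $0,2,\dots,2(n-1)$, giving $H_2=\mathbb{Z}$) together with the Hurewicz theorem, since $\mathbb{C}P^{n-1}$ is simply connected. The argument also runs identically in integral cohomology: the map $p^*:H^2(\mathbb{C}P^{n-1})=\mathbb{Z}\to H^2(S^{2n-1})=0$ is zero, so a section would factor the identity of $H^2(\mathbb{C}P^{n-1})$ through $0$. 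For the base case $n=2$ this recovers the classical nontriviality of the Hopf fibration $S^3\to S^2$.
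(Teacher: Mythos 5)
Your proof is correct, but it takes a genuinely different route from the paper. The paper applies the same formal retraction skeleton (functoriality forcing the identity on $\mathbb{Z}$ to factor through the trivial group), but in singular homology in degree $2n-2$: it quotes $H_{2n-2}(\mathbb{C}P^{n-1})=\mathbb{Z}$ and $H_{2n-2}(S^{2n-1})=0$ for $n\geq 2$, so $p_*\circ s_* = id_*$ is impossible. That argument needs no fibration theory at all -- only the standard homology computations for spheres and projective spaces plus functoriality of $H_i$. Your main line instead uses that $p$ is the generalized Hopf fibration, hence a Serre fibration, and extracts $\pi_2(\mathbb{C}P^{n-1})\cong\pi_1(S^1)=\mathbb{Z}$ from the long exact sequence, with $\pi_1(S^{2n-1})=\pi_2(S^{2n-1})=0$ exactly where the hypothesis $n\geq 2$ enters; this is a heavier input, but it buys structural information (the bundle structure and the connecting isomorphism) that the homology argument does not see, and your Hurewicz fallback removes the fibration dependence if desired. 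Your closing cohomological remark, $p^*:H^2(\mathbb{C}P^{n-1})=\mathbb{Z}\to H^2(S^{2n-1})=0$, is essentially the paper's argument transposed to degree $2$ and dualized; both degree choices work because the sphere's (co)homology is concentrated in degrees $0$ and $2n-1$, while $\mathbb{C}P^{n-1}$ carries a $\mathbb{Z}$ in every even degree from $0$ through $2n-2$. All steps in your writeup check out, including the basepoint bookkeeping and the $n=2$ edge case, where the claim recovers the nontriviality of the classical Hopf map $S^3\to S^2$.
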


\begin{proof}[Proof of \Cref{claim_complex-projective}]
Let $\pi:\Sp^{2d-1}\to \mathcal{D}_\text{pure}(\mathbb{C}^d)$ be the outer product function $\pi(\left|u\right>)=\ket{u}\!\bra{u}$. Since $\mathcal{D}_\text{pure}(\mathbb{C}^d)$ is in fact the image of $\pi$, $\pi$ is surjective. Let $p$ be the quotient map $p: \Sp^{2d-1}\to \operatorname{\mathbb{C}P}^{d-1}$ as above, $p(\left|u\right>)=[\left|u\right>]$, where $[\left|u\right>]$ denotes the equivalence class under the equivalence relation $\left|u\right>\sim e^{i\alpha}\left|u\right>$ for any real $\alpha$. Observe that $\pi\left(\left|u\right>\right)=\pi\left(\left|v\right>\right)$ whenever $p\left(\left|u\right>\right)=p\left(\left|v\right>\right)$. Then by the universal property of the quotient space in topology there exists $t:\operatorname{\mathbb{C}P}^{d-1}\to \mathcal{D}_\text{pure}(\mathbb{C}^d)$ continuous, such that $t\circ p=\pi$:

\begin{center}
\begin{tikzpicture}[node distance=2cm, auto]
  \node (S) {$\Sp^{2d-1}$};
  \node [right of=S] (CP) {$\operatorname{\mathbb{C}P}^{d-1}$};
  \node [node distance=1.4cm, below of=CP] (im) {$\mathcal{D}_\text{pure}(\mathbb{C}^d)$};
  \draw[->] (S) to node[above] {$\pi$} (im.north west);
  \draw[->] (S) to node[above] {$p$} (CP);
  \draw[red, thick, dashed, ->, text=red] (CP) to node {$t$} (im);
\end{tikzpicture}
\end{center}

In particular, $t$ is surjective. Next, we show that $t: [\left|u\right>]\mapsto \ket{u}\!\bra{u}$ is injective: Suppose $\ket{u}\!\bra{u}=\ket{v}\!\bra{v}=:M$. Since $\left|u\right>$, $\left|v\right>$ are unit vectors, there exists a column in matrix $M$ that is nonzero, label this column’s index by $j$. This column is equal to $M_j=u_j^*\left|u\right>=v_j^*\left|v\right>$. Since it is nonzero we can write $\left|u\right>=\frac{v_j^*}{u_j^*}\left|v\right>$, and since $\left|u\right>$, $\left|v\right>$ are unit vectors, we have $\frac{v_j^*}{u_j^*}=e^{i\alpha}$ for some real $\alpha$ and therefore $[\left|u\right>]=[\left|v\right>]$. We have shown that $t:\operatorname{\mathbb{C}P}^{d-1}\to\mathcal{D}_\text{pure}(\mathbb{C}^d)$ is continuous and bijective. The domain of $t$ is compact, because a quotient space of a compact space (here $\Sp^{2d-1}$) is compact. The range of $t$ is Hausdorff. A bijective continuous map from a compact space to a Hausdorff space is closed and hence a homeomorphism. Thus, $t$ is a homeomorphism.
\end{proof}

\begin{proof}[Proof of \Cref{claim_alg-top}]
This proof uses some standard facts in homology theory. They can be found, for example, in Chapter 2 in \cite{hatcher2000algebraic}. %Corollary 2.14, page 140, Proposition 2.9
The reduced homology groups of $\Sp^m$ are
\begin{equation*}
    \tilde{H}_i\left(\Sp^m\right)=\begin{cases}
\mathbb{Z}, & i=m\\
0, & i\neq m\end{cases}.
\end{equation*}
The singular homology groups of $\operatorname{\mathbb{C}P}^{k}$ are
\begin{equation*}
    H_i\left(\operatorname{\mathbb{C}P}^{k}\right)=\begin{cases}
\mathbb{Z}, & i=0,2,4\dots, 2k\\
0, & \text{otherwise}\end{cases},
\end{equation*}
and by definition $\tilde{H}_i\left(X\right)=H_i\left(X\right)$ for all $i>0$ for any space $X$.
Assume towards contradiction that $s:\operatorname{\mathbb{C}P}^{d-1}\to \Sp^{2d-1}$ continuous exists for some $d\geq 2$ such that $p\circ s=id$. In diagram:

\begin{center}
\begin{tikzpicture}[node distance=2cm, auto]
  \node (CP1) {$\operatorname{\mathbb{C}P}^{d-1}$};
  \node [right of=CP1] (S) {$\Sp^{2d-1}$};
  \node [right of=S] (CP2) {$\operatorname{\mathbb{C}P}^{d-1}$};
  \draw[->] (CP1) to node[above] {$s$} (S);
  \draw[->] (S) to node[above] {$p$} (CP2);
  \draw[->, bend right] (CP1) to node[above] {$id$} (CP2);
\end{tikzpicture}
\end{center}

\noindent A continuous map between two spaces $f:X\to Y$ induces for any $i$ a homomorphism $f_*:H_i(X)\to H_i(Y)$ between their $i$-th singular homology groups. For $i=2d-2$ we get  

\begin{center}
\begin{tikzpicture}[node distance=3.5cm, auto]
  \node (CP1) {$H_{2d-2}\left(\operatorname{\mathbb{C}P}^{d-1}\right)$};
  \node [right of=CP1] (S) {$H_{2d-2}\left(\Sp^{2d-1}\right)$};
  \node [right of=S] (CP2) {$H_{2d-2}\left(\operatorname{\mathbb{C}P}^{d-1}\right)$};
  \draw[->] (CP1) to node[above] {$s_*$} (S);
  \draw[->] (S) to node[above] {$p_*$} (CP2);
  \draw[->, bend right] (CP1) to node[above] {$id_*$} (CP2);
\end{tikzpicture}
\end{center}

\noindent where $id_*$ is the identity map. However, by the above,  $H_{2d-2}(\operatorname{\mathbb{C}P}^{d-1})=\mathbb{Z}$ and $H_{2d-2}(\Sp^{2d-1})=0$ for $d\geq 2$, which contradicts the last diagram.
\end{proof}

\section{Possibilites via randomness (proofs)}\label{app_possibilities_randomness}

Here we prove \Cref{thm_tom2} and \Cref{thm_superposition_i}.

\begin{proof}[Proof of \Cref{thm_tom2}]
    We write $\rho=\ket{u}\!\bra{u}$ and $x=\ket{w}\!\bra{w}$ for the unknown state and the measurement outcome respectively. Suppose
    \begin{equation}
    \left|\left|x-\rho\right|\right|_{\text{tr}}=\left|\left|\ket{w}\!\bra{w}-\ket{u}\!\bra{u}\right|\right|_{\text{tr}}\leq \epsilon_N^\text{tr}< \frac{1}{d}\label{eq:state_supposition}\text{.}
    \end{equation}

\noindent We have 
\begin{align}
    \left|\left|\ket{w}-\langle u |w \rangle \ket{u}\right|\right|_2
    &= \sqrt{\left|\left|\ket{w}\right|\right|^2_2 - 2 \left|\langle u |w \rangle\langle w |u \rangle\right|
    + \left|\langle u |w \rangle \right|^2\left|\left|\ket{u}\right|\right|^2_2}\nonumber\\
    &= \sqrt{1-\left|\langle u |w \rangle\right|^2}\nonumber\\
    &\leq \frac{1}{2}\epsilon_N^\text{tr} \text{,}\label{eq:euc_from_tr}
\end{align}
where the last inequality follows from the identity
\begin{equation}
    \left|1-\left|\langle u |w \rangle\right|^2\right|
    = \frac{1}{4}\left|\left|\ket{u}\!\bra{u}-\ket{w}\!\bra{w}\right|\right|_{\text{tr}}^2
    \leq \frac{1}{4}\left(\epsilon_N^\text{tr}\right)^2\text{.}
    \label{eq:close_to_1}
\end{equation}
See, for example, \cite[eq. (9.172)]{wilde2011classical}. 

Next we use \eqref{eq:euc_from_tr} to upper bound
\begin{equation}
    \left|\left|\vv_{r(x)}(x)-\vv_{r(x)}(\rho)\right|\right|_2 = \left|\left|\ket{w}-\frac{|\langle r |u\rangle|}{|\langle r |w\rangle|}\frac{\langle r |w\rangle}{\langle r |u\rangle}\ket{u}\right|\right|_2\label{eq:v_i_rewritten}\text{.}
\end{equation}
\noindent where we define $\vv_i$ as in \eqref{eq:v_i} and we wrote $r(x)=r$ for readability. The rule for choosing $r(x)$ ensures $|\langle r | w \rangle|^2\geq1/d$ which together with \eqref{eq:state_supposition} ensures $\langle r | u \rangle\neq 0$. Comparing the left-hand side of \eqref{eq:euc_from_tr} and the right-hand side of \eqref{eq:v_i_rewritten}, we note that it is enough to compare the scalars next to $\ket{u}$. Using the closeness of the scalars that follows from \eqref{eq:euc_from_tr},
\begin{align}
    \left| \langle r |w \rangle - \langle u |w \rangle \langle r |u \rangle\right|&\leq \frac{1}{2}\epsilon_N^\text{tr}\label{eq:ineq_midstep}
\end{align}
\noindent for any $r\in\{0, 1,\dots d-1\}$, we upper bound

\begin{align}
    \left|\langle u |w \rangle- \frac{|\langle r |u\rangle|}{|\langle r |w\rangle|}\frac{\langle r |w\rangle}{\langle r |u\rangle}\right|
    &\leq 
    \left|\langle u |w \rangle-\tfrac{\langle u |w \rangle}{|\langle u |w \rangle|}\right|
    +
    \left|\tfrac{\langle u |w \rangle}{|\langle u |w \rangle|} - \langle u |w \rangle\frac{|\langle r |u\rangle|}{|\langle r |w\rangle|}\right|
    +
    \left|\langle u |w \rangle\frac{|\langle r |u\rangle|}{|\langle r |w\rangle|} - \frac{|\langle r |u\rangle|}{|\langle r |w\rangle|}\frac{\langle r |w\rangle}{\langle r |u\rangle}\right|\nonumber\\
    &= \left||\langle u |w \rangle|-\tfrac{|\langle u |w \rangle|}{|\langle u |w \rangle|}\right|
    +
    |\langle u |w \rangle| \left|\tfrac{1}{|\langle u |w \rangle|} - \frac{|\langle r |u\rangle|}{|\langle r |w\rangle|}\right|
    +
    \frac{|\langle r |u\rangle|}{|\langle r |w\rangle|}\left|\langle u |w \rangle - \frac{\langle r |w\rangle}{\langle r |u\rangle}\right|\nonumber\\
    &\leq \left||\langle u |w \rangle|-1\right| + |\langle u |w \rangle|\frac{\frac{1}{2}\epsilon_N^\text{tr}}{|\langle u |w \rangle|\,|\langle r |w\rangle|} + \frac{|\langle r |u\rangle|}{|\langle r |w\rangle|}\frac{\frac{1}{2}\epsilon_N^\text{tr}}{|\langle r |u\rangle|}\label{eq:last_line_st}\text{,}
\end{align}
\noindent where, in addition to \eqref{eq:ineq_midstep}, for the middle term in the last line we used also $|\,|a| - |b|\,|\leq |a-b|$ $a,b\in\mathbb{C}$. Applying to the first term $|1-|\langle u |w \rangle||\leq |1-|\langle u |w \rangle|^2|$ and \eqref{eq:close_to_1}, and simplifying the other terms, we get
\begin{equation}
    \left|\langle u |w \rangle- \frac{|\langle r |u\rangle|}{|\langle r |w\rangle|}\frac{\langle r |w\rangle}{\langle r |u\rangle}\right|\leq \frac{1}{4}\left(\epsilon_N^{\text{tr}}\right)^2 + \frac{\epsilon_N^\text{tr}}{|\langle r |w\rangle|}\text{.}\label{eq:scalart_close_st}
\end{equation}

\noindent We obtain
\begin{equation}
\left|\left|\vv_{r(x)}(x)-\vv_{r(x)}(\rho)\right|\right|_2 \leq \left(\frac{1}{|\langle r |w\rangle|}+\frac{1}{2}\right)\epsilon_N^{\text{tr}}
+ \frac{1}{4}\left(\epsilon_N^{\text{tr}}\right)^2\label{eq:last_euc}
\end{equation}

\noindent by combining \eqref{eq:euc_from_tr} and \eqref{eq:scalart_close_st}. 

The following parameters then satisfy \Cref{def_sttom2}. Set $\delta_N=\delta_N^\text{tr}$. Let $N_0$ be such that $\epsilon_{N}^\text{tr}\leq 1/d$ for all $N\geq N_0$, and set

\begin{equation}
    \epsilon_{N}:=  
    \begin{cases}
    2 & \text{if } N<N_0\\
    \left(\sqrt{d\,}+\frac{1}{2}\right)\epsilon_N^\text{tr} + \frac{1}{4}\left(\epsilon_N^{\text{tr}}\right)^2 & \text{if } N\geq N_0
    \end{cases}\text{.}
\end{equation}
Since the $r(x)$ function ensures $|\left\langle r|w\right\rangle|^2\geq 1/d$ for $N\geq N_0$, this $\epsilon_N$ upper bounds \eqref{eq:last_euc}. Moreover, $\lim_{N\to\infty}\epsilon_N=0$ and  $\lim_{N\to\infty}\delta_N=0$, satisfying \Cref{def_sttom2}.

\end{proof}

\begin{proof}[Proof of \Cref{thm_superposition_i}]
    We suggest the following random protocol $\{\Phi_{ij}\}_{ij}$. First, it performs a vector tomography on each of the two input states. The classical estimate of the desired superposition is then $\Sxy{ij}$ with probability $p_{N,\rho, \sigma}(x,i, y,j)$,
    where $x$ and $y$ are the estimates of $\rho$ and $\sigma$. 
    Assume for now that whenever the probability is nonzero, we have $||\Sxy{ij}||_2^2\geq K$ for some positive constant $K$. We will justify this later. Having the classical description of $\Sxy{ij}$, the protocol runs a circuit that prepares the renormalized $\Sxy{ij}$ within the error $\epsilon/4$. In other words, $\Phi_{ij}$ satisfies:
    \begin{equation}
        \left|\left|\Phi_{ij}(\rho^{\otimes N}\otimes\sigma^{\otimes N}) - \sum_{x,y\in T_N}p_{N,\rho, \sigma}(x,i, y,j)
        \frac{\Sxy{ij}\!\Sxybra{ij}}{\left\langle s_{ij}{\scriptstyle(x,y)}|s_{ij}{\scriptstyle(x,y)}\right\rangle}\right|\right|_{\operatorname{tr}}\leq \epsilon/4
    \end{equation}
    and, consequently
    \begin{equation}
        \left|\left|\tr\left[\Phi_{ij}(\rho^{\otimes N}\otimes\sigma^{\otimes N})\right] - \sum_{x,y\in T_N}p_{N,\rho, \sigma}(x,i, y,j)\right|\right|_{\operatorname{tr}}\leq \epsilon/4\text{.}
    \end{equation}
    Thus, to prove \Cref{thm_superposition_i} it remains to show that
\begin{equation}
    D:=\sum_{i, j}\left|\left|\sum_{x,y\in T_N}
    p_{N,\rho, \sigma}(x,i, y,j)\left(
    \frac{\Sxy{ij}\!\Sxybra{ij}}{\left\langle s_{ij}{\scriptstyle(x,y)}|s_{ij}{\scriptstyle(x,y)}\right\rangle} - 
    \frac{\Srs{ij}\!\Srsbra{ij}}{\left\langle s_{ij}{\scriptstyle(\rho,\sigma)}|s_{ij}{\scriptstyle(\rho,\sigma)}\right\rangle}\right)
    \right|\right|_{\operatorname{tr}}\leq \epsilon/2\text{.}
\end{equation}

\noindent Towards proving the above, we first show that
\begin{align}
    D
    &
    \leq \sum_{i, j}\sum_{x,y}p_{N,\rho, \sigma}(x,i, y,j)\left|\left|
    \frac{\Sxy{ij}\!\Sxybra{ij}}{\left\langle s_{ij}{\scriptstyle(x,y)}|s_{ij}{\scriptstyle(x,y)}\right\rangle} - 
    \frac{\Srs{ij}\!\Srsbra{ij}}{\left\langle s_{ij}{\scriptstyle(\rho,\sigma)}|s_{ij}{\scriptstyle(\rho,\sigma)}\right\rangle}
    \right|\right|_{\operatorname{tr}}\nonumber\\
    &
    \leq \sum_{i, j, x, y}p_{N,\rho, \sigma}(x,i, y,j)\frac{2}{||\Sxy{ij}||^2_2}\left|\left|
    \Sxy{ij}\!\Sxybra{ij} - 
    \Srs{ij}\!\Srsbra{ij}
    \right|\right|_{\operatorname{tr}}\nonumber\\
    &
    \leq\frac{4\left(\alpha +\beta \right)}{K} \sum_{i, j,x,y}p_{N,\rho, \sigma}(x,i, y,j)\left|\left|
    \Sxy{ij} - 
    \Srs{ij}
    \right|\right|_2\nonumber\\
    &
    \leq \frac{4\left(\alpha +\beta \right)}{K}\left(\alpha \sum_{i,x}p_{N,\rho,\sigma}(x,i)\left|\left|
\vv_i(x) - \vv_i(\rho)\right|\right|_2  + \beta \sum_{j,y}p_{N,\rho,\sigma}(y,j)\left|\left|
\vv_j(y) - \vv_j(\sigma)\right|\right|_2\right)\nonumber\\
&
\leq \frac{4\left(\alpha +\beta \right)^2}{K}\left(\epsilon_N+2\delta_N\right)\text{,}\label{eq_corr_main}
\end{align}
where the second inequality holds because of the identity 
\begin{equation}
    \left|\left| \frac{A}{\left|\left|A\right|\right|} - 
    \frac{B}{\left|\left|B\right|\right|}
    \right|\right|
    \leq \frac{2}{\left|\left|A\right|\right|}\left|\left| A - B \right|\right|\text{,}\label{eq:renormalized_vs_not}
\end{equation}
which holds for any norm and operators $A, B$, and can be proven by substracting and adding $B/||A||$, the third inequality follows from our assumption that either $p_{N,\rho}(x,i)\,
    p_{N,\sigma}(y,j)=0$ or $||\Sxy{ij}||^2_2\geq K$ and from the identity 
\begin{equation}
    ||\ket{u}\!\bra{u}-\ket{w}\!\bra{w}||_{\tr}\leq 2\max(||\ket{u}||,||\ket{w}||)||\ket{u}-\ket{w}||_2\text{,}\label{eq:tr_euc}
\end{equation}
the fourth inequality follows from the triangle inequality, and the fifth inequality holds assuming the two vector tomographies satisfy \eqref{eq_close_i}.

Next we ensure that $K$ exists. If $\alpha\neq\beta$ then we can set $K=|\alpha-\beta|^2$, because
\begin{equation}
    ||\Sxy{ij}||_2=||\alpha\vv_i(x)+\beta\vv_j(y)||_2\geq|\alpha-\beta|\text{,}
\end{equation}
by the triangle inequality. This $K$ is indeed independent of $N$, $x$, $y$. If $\alpha=\beta$, then we need another way to make sure that a small-norm $\Sxy{ij}$ has probability zero, $p_{N,\rho, \sigma}(x,i, y,j)=0$. In the vector tomography we have seen so far, for a given $x$ all $i$-values except one have probability zero, because $i$, the index of the $\vv$-function, is chosen deterministically from $x$ by the rule $i=r(x)=\min\{l: \bra{l}x\ket{l}\geq 1/d\}$. For $j$, we will use a new rule $j=r'(x,y)$:
\begin{equation}
    r'(x, y):=  
    \begin{cases}
    r(x) & \text{if } \left|\left| x - y \right|\right|_{\operatorname{tr}} < \frac{1}{2d} \\
    \min\{l: \bra{l}y\ket{l}\geq 1/d\} & \text{otherwise}\label{eq:r_xy}
    \end{cases}
\end{equation}

\noindent \Cref{def_sttom2} is satisfied also with this rule (replace $d$ with $2d$ in the proof of \Cref{thm_tom2}).

Within the case $\alpha=\beta$, suppose that the two state tomographies output $x$, $y$ such that $\left|\left| x - y \right|\right|_{\tr} < \frac{1}{2d}$. The rule $r'$ ensures that $p_{N,\rho, \sigma}(x,i, y,j)$ is nonzero only for $\Sxy{ij}=\Sxy{ii}=\alpha(\vv_i(x) + \vv_i(y))$ and
\begin{align}
    \left|\left|\Sxy{ii} \right|\right|_2=\alpha\left|\left|\vv_i(x) + \vv_i(y) \right|\right|_2
    &\geq \alpha\left( 2- \left|\left|\vv_i(x)-\vv_i(y)\right|\right|_2\right) \nonumber\\
    &\geq \alpha\left( 2-\frac{2}{\sqrt{\langle i| x | i \rangle}}\left|\left|x\ket{i}-y\ket{i}\right|\right|_2\right) \nonumber\\
    &\geq \alpha\left(2-\frac{2}{\sqrt{\langle i| x | i \rangle}}\left|\left|x-y\right|\right|_{\tr}\right) \nonumber\\
    &\geq \alpha\left(2-\sqrt{d}\frac{2}{\sqrt{2d}}\right)=\alpha(2-\sqrt{2})\text{,}\label{eq_vecdiff}
\end{align}
where we applied the triangle inequality in the first line, inequality \eqref{eq:renormalized_vs_not} in the second line, the contractivity of the trace norm in the third line, and the $r$ and $r'$ rules in the fourth line.
The resulting lower bound is our $\sqrt{K}$ for this case.

Now suppose $\alpha=\beta$ and $x,y$ are such that $\left|\left| x - y \right|\right|_{\tr} \geq \frac{1}{2d}$. We have 
\begin{equation}
\left|\left|\Sxy{ij} \right|\right|_2=\alpha \left|\left|\vv_i(x)+\vv_j(y)) \right|\right|_2 
\geq 
\frac{\alpha}{2}\left|\left| x - y \right|\right|_{\tr} \geq \frac{\alpha}{4d}\text{,}
\end{equation}
\noindent which is an application of \eqref{eq:tr_euc}. This lower bound is our $\sqrt{K}$ for this final case. Taking the minimum $K$ over all the cases, we choose $N$ large enough so that \eqref{eq_corr_main} is upper bounded by $\epsilon/2$. This completes the proof.
\end{proof}

\end{document}